\definecolor{light-gray}{gray}{0.95}
\definecolor{gray}{gray}{0.90}
\renewcommand{\baselinestretch}{.97}
\newcommand{\comment}[1]{}
\newcommand{\defn}[1]{\textbf{\emph{#1}}}
\newcommand{\poly}[1]{{poly({#1})}}
\newtheorem{theorem}{Theorem}
\newtheorem{lemma}{Lemma}
\newtheorem{corollary}{Corollary}
\newcommand{\suc}{\mbox{suc}}
\newcommand{\spath}{search path\xspace}
\newcommand{\whp}{w.h.p\xspace}
\newcommand{\qm}{\mbox{group}\xspace}
\newcommand{\qms}{\mbox{groups}\xspace}
\newcommand{\Qm}{\mbox{Group}\xspace}
\newcommand{\Qms}{\mbox{Groups}\xspace}
\newcommand{\Q}{\mbox{G}}
\newcommand{\G}{\mbox{H}} 
\newcommand{\QG}{\mathcal{G}} 
\newcommand{\pbad}{p_f}
\newcommand{\psearch}{q_f}
\newif\ifcomments
\begin{document}

\title{\huge Tiny Groups Tackle Byzantine Adversaries\vspace{-8pt}}

\author{\IEEEauthorblockN{Mercy O. Jaiyeola\IEEEauthorrefmark{1},
Kyle Patron\IEEEauthorrefmark{2},
Jared Saia\IEEEauthorrefmark{3}, 
Maxwell Young\IEEEauthorrefmark{1},
Qian M. Zhou\IEEEauthorrefmark{1}\mbox{\hspace{2.6cm}}}\vspace{-8pt}
\and
\IEEEauthorblockA{\IEEEauthorrefmark{1}Mississippi State University,\\
Dept. of Computer Science and Eng.,\\
Miss. State, MS, USA\\ {\small \texttt{\{moj31,my325,qz70\}@msstate.edu}}}
\and
\IEEEauthorblockA{\IEEEauthorrefmark{2}Palantir Technologies\\ 
New York, NY, USA\\
{\small \texttt{kyle.patron@gmail.com}}}
\and
\IEEEauthorblockA{\IEEEauthorrefmark{3}University of New Mexico,\\
Department of Computer Science,\\
Albuquerque, NM, USA\\ {\small \texttt{saia@cs.unm.edu}}}}

\maketitle

\setlength{\footskip}{15pt}
\thispagestyle{plain}
\pagestyle{plain}
 \setcounter{page}{1} 

\begin{abstract}
A popular technique for tolerating malicious faults in open distributed systems is to establish small \defn{\qms} of participants, each of which  has a non-faulty majority.  These \qms are used as building blocks to design attack-resistant algorithms.

Despite over a decade of active research, current  constructions require \qm~sizes of {\boldmath $O(\log n)$}, where {\boldmath $n$} is the number of participants in the system. This \qm~size is important since communication and state  costs scale polynomially with this parameter. Given the stubbornness of this logarithmic barrier, a natural question is whether better bounds are possible.

Here, we consider an attacker that controls a constant fraction of the total computational resources in the system. By leveraging proof-of-work (PoW), we demonstrate how to reduce the \qm~size \defn{exponentially} to {\boldmath $O(\log\log n)$} while maintaining strong security guarantees.  This reduction in \qm~size yields a significant improvement in  communication and state costs.\end{abstract}


\IEEEpeerreviewmaketitle

\section{Introduction}\label{sec:intro}\vspace{-0pt}
Byzantine fault tolerance addresses the challenge of performing useful work when participants in the system are malicious. Participants, or \defn{identifiers (IDs)} in the system, may be malicious; these malicious IDs can discard or corrupt information that is routed through them or stored on them.   


A popular technique for overcoming these challenges is to arrange IDs into sets called \defn{\qms},\footnote{Such sets have appeared under different names in the literature, such as ``swarms''~\cite{fiat:making}, ``clusters''~\cite{guerraoui:highly}, and ``quorums''~\cite{young:towards}. Our choice of ``groups'' aligns with pioneering work in this area~\cite{awerbuch_scheideler:group}.} where each has a non-faulty majority.  We can then ensure the following:\vspace{4pt}
\begin{itemize}

\item Secure routing is possible. For \qms $\Q_1$ and $\Q_2$ along a route,  all members of $\Q_1$ transmit messages to all members of $\Q_2$. This all-to-all exchange, followed by majority filtering by each non-faulty ID in $\Q_2$,  guarantees correctness of communication between \qms despite malicious IDs.\vspace{5pt}

\item Computation is performed by all members of a \qm~via protocols for \defn{Byzantine agreement (BA)}~\cite{lamport1982byzantine},  or more general secure multiparty computation~\cite{Yao:1986:GES:1382439.1382944}, to guarantee that tasks execute correctly.  In this way, each \qm~simulates a \defn{reliable processor} upon which jobs can be run.

\vspace{3pt}

\end{itemize}

The use of \qms provides a {\it scalable} approach to designing an attack-resistant distributed system, by avoiding the need to have all $n$ IDs perform BA in concert.\footnote{\Qms also improve robustness in other ways. Members may agree to ignore an ID if it misbehaves too often, hence reducing spamming. Data may also be redundantly stored at multiple \qm~members.}\smallskip\smallskip


\noindent Designing attack-resistant systems using \qms has been an active topic of research for over a decade, with many results~\cite{HK,fiat:making,awerbuch:towards,awerbuch:random,awerbuch:towards2,saia:reducing,castro:secure,halo:kapadia,salsa:nambiar,young:towards,guerraoui:highly,naor:novel,saad:self-healing}. Yet, despite this progress, an enduring requirement is that each \qm~contains $O(\log n)$ members; that is, the group size grows logarithmically in $n$.

Why does this logarithmic size matter? At first glance, it is an unlikely bottleneck. However, since \qms~are building blocks for the system, their size, $|\Q|$, impacts costs: \vspace{2pt}
\begin{enumerate}[label=(\roman*)]

\item{\it Cost of \Qm~Communication.}  \Qm~members must often act in concert; for example, executing distributed key generation~\cite{young:towards}, or generating random numbers~\cite{fiat:making,awerbuch:random}.  Such protocols require messages be exchanged between all members.  We label this as \defn{group communication} and it has $\Theta(|\Q|^2)$ message cost. \vspace{4pt}

\item{\it Cost of Secure Routing.} Routing via all-to-all exchange between two \qms incurs $\Omega(|\Q|^2)$ message complexity. Given a route of length $D$, communication between any two \qms~requires $O(D|\Q|^2)$ messages.\footnote{Improvements are possible, but they come with caveats.  Results in \cite{fiat:making,saia:reducing} lower the cost to $O(D|\Q|)$ in expectation but require a non-trivial (expander-like) construction,  and~\cite{young:towards} further reduces this to $O(D)$ in expectation but with a \poly{$|\Q|$} message cost each time routing tables are updated which is expensive even with moderate churn.}\vspace{5pt}

\item{\it  Cost of State Maintenance.} Each ID $w$ must maintain state on  its neighbors; this includes both the members of all \qms to which $w$  belongs {\it\underline{and}} the members of neighboring \qms. This requires storing link information, as well as periodically testing links for liveness.\footnote{Constructions in~\cite{naor:novel,fiat:making} have each ID belonging to $\eta>1$ \qms for a state overhead of $\Omega(|\Q| \eta )$. Also, if each \qm~links to $\Delta$ neighboring \qms, then $O(|\Q|\Delta)$ links must be maintained; typically, $\Delta = O(\log n)$.}\vspace{4pt}
\end{enumerate}

\noindent In each case above, reducing $|\Q|$ would directly reduce cost. Unfortunately, in prior results,  $|\Q| \approx \log n$  is key to ensuring  that all \qms have a non-faulty majority with high probability (w.h.p.).\footnote{With probability at least $1-1/n^{c'}$ for a tunable constant $c' \geq 1$.} Without this property, all previous \qm~constructions succumb to adversarial attack. A natural question is: {\it Are there new ideas that allow us to decrease $|\Q|$ while maintaining strong security guarantees?}\smallskip

In response, we consider an attacker that controls a constant fraction of the total computational resources in the system. By employing proof-of-work (PoW), we obtain our main result: {\bf \Qm~size can be reduced exponentially while still allowing routing in all but a vanishingly small portion (an {\boldmath{$o(1)$}}-fraction) of the network}.\smallskip 


\subsection{Defining {\large\boldmath{$\varepsilon$}}-Robustness}  

Consider a system of $n$ IDs and $n$ \qms, where a $\beta$-fraction of IDs are malicious; such IDs may deviate arbitrarily from any prescribed protocol to derail operations in the network.  The following defines our notion of  \defn{\boldsymbol{$\varepsilon$}-robustness}:\smallskip

\noindent {\it  For a small $\varepsilon>0$, at least $(1-\varepsilon)n$ \qms have a non-faulty majority and can securely route messages to each other.}\smallskip



\noindent The parameter $\beta$ is a sufficiently small positive constant  less than $1/2$, and $\varepsilon=o(1)$.  We consider the following questions:\smallskip

\noindent{\bf Is this a useful concept?} Consider decentralized storage and retrieval of data. This definition guarantees all but an $\varepsilon$-fraction of data is reachable and maintained reliably.  Example applications include distributed databases, name services, and content-sharing networks. Alternatively, consider $n$ jobs in an open computing platform that are run on individual machines. This definition guarantees that all but an $\varepsilon$-fraction of those jobs can be correctly computed.\footnote{While this may not be sufficient for general computation, it is valuable for tasks where an $o(1)$ error rate or bias can be tolerated; for example, obtaining statistics on a group of machines to measure network performance.}

\smallskip 


\noindent{\bf Is satisfying this definition trivial?} Given $\Theta(n)$ non-faulty IDs, this definition characterizes simulating $(1-\varepsilon)n$ reliable processors \underline{\it and} the ability to route information between them.  If we ignore the use of \qms or, equivalently, consider \qms each consisting of a single ID, then we trivially have $(1-\beta)n$ reliable processors, but routing between them is challenging. For example, establishing links between all pairs of IDs will give secure routing, but this is hardly scalable.  \smallskip 

\noindent{\bf Do previous solutions solve this problem?} Prior results using \qms are subsumed by this definition and they address  $\varepsilon = 1/$\poly{$n$}. In this case,  routing is possible -- albeit, costly -- because w.h.p. {\it all} \qms have a non-faulty majority.

To reduce cost, we consider {{\boldmath{$\varepsilon = 1/${\bf\poly{$\log n$}}}}. We show how this allows us to reduce the \qm~size exponentially,  but at the price of having a small fraction of the \qms with a majority of malicious IDs.  \vspace{-2pt}


\subsection{Related Work}\label{sec:related-work}

\noindent{\bf Tolerating Byzantine Faults via \Qms.} The use of \qms for building attack-resistant distributed systems has received significant attention, and all address the case of $\varepsilon=1$/\poly{$n$}. 

Early results obtain a \poly{$\log n$} factor increase in costs, assuming constraints on the amount of system dynamism~\cite{HK,naor_wieder:a_simple,fiat:making,fiat_saia:censorship,awerbuch_scheideler:group}. 

Full dynamism was achieved by Awerbuch and Scheideler in a series of breakthrough results~\cite{awerbuch:towards,awerbuch:random,awerbuch:towards2}; they propose a {\it cuckoo rule} that w.h.p. preserves a non-faulty majority in all \qms over $n^{\Theta(1)}$ joins/departures when the system size remains $\Theta(n)$. More recently, Guerraoui~et al.~\cite{guerraoui:highly} give similar guarantees when the system size can vary polynomially. 

Simulations of the cuckoo rule are conducted in~\cite{sen:commensal}.  The trade-off between \qm~size and security is examined, and findings suggest that $|\Q|$ must be fairly large. For $n=8,192$ (the largest size examined) and $\beta \approx 0.002$,  $|\Q|=64$ preserves a non-faulty majority in each \qm~for $10^5$  joins/departures; $\beta \approx 0.07$ is possible with suggested improvements in~\cite{sen:commensal}. 

Several results have focused on reducing communication costs when the good majority of all \qms is guaranteed via an algorithm like the cuckoo rule~\cite{saia:reducing,young:towards}. Here too, \qm~size impacts performance and $|\Q|=30$ incurs significant latency in PlanetLab experiments~\cite{young:towards}.  \Qms have also been used in conjunction with quarantining malicious IDs~\cite{saad:self-healing2,saad:self-healing} with limited churn. 

Finally, we observe that none of these results explicitly uses PoW, with the possible exception of~\cite{awerbuch_scheideler:group}, where  computational challenges or Turing tests are briefly discussed as a means for throttling the join rate of Byzantine IDs. These prior results assume a model where the fraction of Byzantine IDs is always limited to strictly less than $1/2$. In contrast, the use of PoW provides a plausible mechanism by which to enforce this limit (see below for more discussion).\smallskip



\noindent{\bf Attack-Resistance Without \Qms.} Other  distributed  constructions exist that do not explicitly use \qms~\cite{saia_fiat_gribble_karlin_saroiu:dynamically,fiat_saia:censorship,datar:butterflies}. However, the associated techniques retain some form of $O(\log n)$ redundancy with regards to data placement or route selection and, therefore, incur the typical \poly{$\log n$} cost.  

In~\cite{castro:secure,halo:kapadia,salsa:nambiar}, malicious faults are tolerated by routing along multiple diverse routes. However, it is unclear that these systems can provide theoretical guarantees on robustness.


Byzantine resistance when $O(\sqrt{n}/$\poly{$\log n$}$)$ IDs may depart and join {\it per time step} is examined in~\cite{augustine:fast,Augustine2015}.\footnote{High churn without Byzantine fault tolerance is considered in \cite{7354403,Augustine:2012,Augustine:2013:SSD:2486159.2486170}.} In this challenging model, (roughly) $O(\sqrt{n})$ Byzantine IDs can be tolerated. Our result addresses more moderate churn while tolerating $\Theta(n)$ Byzantine IDs.

Central authorities (CAs) have been used in  prior  results\cite{castro:secure,rodrigues:rosebud} to achieve robustness. While our results can be used in conjunction with a CA, it is not always plausible to assume such an authority is available and immune to attack. For this reason, our work does not depend on a CA.\smallskip



\noindent{\bf Computational Puzzles.} Proof-of-work (PoW) via computational puzzles has been used to mitigate the  Sybil attack~\cite{douceur02sybil}, whereby an adversary overwhelms a system with a large number of malicious IDs. We note that such PoW schemes have been proposed in decentralized settings; for examples, see~\cite{li:sybilcontrol,pow-without}. However, such PoW schemes only {\it limit the number of Sybil IDs} --- typically commensurate with the amount of computational power available to the adversary --- and the problem of tolerating these adversarial IDs must still be addressed by other means; for examples, see~\cite{yu:survey,scheideler:shell}.


A prominent example of using PoW to provide security is Bitcoin~\cite{nakamoto:bitcoin}. However, note that analyses of Bitcoin and related systems typically assume a communication primitive that allows an ID to disseminate a value to all other IDs within a known bounded constant amount of time despite an adversary~\cite{Garay2015,Micali16,Luu:2016}. In contrast, our results do not assume the existence of such a primitive. 

We note that PoW imposes a computational overhead on the system participants. Nonetheless, examples such as Bitcoin and emerging blockchain technologies (for example, Ethereum~\cite{ethereum}) illustrate the success of PoW in practice, and exemplify that computational overhead from PoW may be tolerable given the security guarantees received in exchange.



\subsection{Our Model and Preliminaries}\label{sec:model}\vspace{-2pt}

\noindent We consider a system of $n$ IDs. An ID is \defn{good} if it obeys the protocol; otherwise, the ID is Byzantine or \defn{bad}. \smallskip

\noindent{\bf The Adversary.} We assume that our adversary controls a $\beta$-fraction of the computational power in the network, where $\beta$ is a sufficiently small positive constant less than $1/2$.\footnote{For simplicity in our proofs, we let $\beta$ be small since we do not try to optimize this quantity. However, larger values of $\beta$ are likely possible.}  This is a common assumption when using PoW to design attack-resistant, open systems~\cite{parno:portcullis,li:sybilcontrol,pow-without}.  For simplicity, throughout Sections~\ref{sec:static} and~\ref{sec:dynamic}, we assume there is always at most a $\beta$ fraction of bad IDs. In Section~\ref{sec:pow}, this assumption is justified by proving that the adversary is constrained, via proof-of-work assumptions, to generating (roughly) at most a $\beta$-fraction of IDs at any time.  

We assume that a single adversary controls all the bad IDs.  This is a challenging model since a single adversary allows the bad IDs to perfectly collude and coordinate their attacks. The adversary also knows the network topology and all message contents; however, the adversary does not know the random bits generated locally by any good ID.\vspace{3pt}

\noindent{\bf \Qms.}  In our system, each \qm~has size $\Theta(\log\log n)$. Each ID $w$ has its own \qm~$\Q_w$ and $w$ is referred to as the \defn{leader}.  A \qm~$\Q$ is \defn{good} if (i) $d_1\ln\ln n \leq |\Q| \leq d_2\ln\ln n$ for appropriate constants $d_1 < d_2 $, and (ii) the number of bad IDs in $\Q$ is at most $(1+\delta)\beta|\Q|$ for some tunably small constant $\delta>0$ depending only on sufficiently large $n$; otherwise, the group is \defn{bad}. Note that \qms are not necessarily disjoint;  in addition to being the leader of $\Q_w$, ID $w$ may belong to other \qms. \Qm~construction is described in Section~\ref{sec:join-dep}. \vspace{3pt}

\noindent{\bf Input Graph.}  Our result builds off an \defn{input graph} {\boldmath $H$} on {\boldmath $N$} vertices, where each vertex corresponds to an ID.\footnote{For clarity, we use $N$ to denote the number of IDs in $\G$, and we use $n$ for the number of IDs in our attack-resistant construction.}  Each ID is a virtual participant, and each ID is represented as  a value in the interval $[0,1)$ known as the \defn{ID space}; this is viewed as a unit ring where moving \defn{clockwise} along the ring corresponds to moving from away from $0$ towards $1$. The \defn{successor} of a point $x$  in $[0,1)$ is the first ID encountered by moving clockwise from $x$ on the unit ring; this is denoted by \defn{\suc}{\boldmath{($x$)}}.  Assuming there is no adversary and that IDs are distributed independently and uniformly at random (u.a.r.) in $[0,1)$, the following properties hold for $\G$ with probability at least $1 - N^{-c'}$ for a tunable constant $c'\geq1$. \vspace{3pt}

\begin{description}[leftmargin=5pt]
\item{\it P1 -- Search Functionality.} There exists a \defn{search} (or \defn{routing}) \defn{algorithm} that, for any \defn{key value} in $[0,1)$, returns contact information for the ID responsible for the corresponding \defn{resource} (i.e., data item, computational job, network printer, etc.). A search requires traversing {\boldmath{$D$}} $= O(\log N)$ IDs.\footnote{An ID is ``traversed''  if the ID lies on the path of search between the initiator of the search and the ID responsible for the resource being sought. For more discussion, please see our online version~\cite{JaiyeolaPSYZ17},  Appendix~\ref{sec:background}.} \vspace{3pt}
\item{\it P2 -- Load Balancing.} A randomly chosen ID is responsible for at most a $(1+\delta'')/N$-fraction of the key values (and the corresponding resources) for an arbitrarily small $\delta''>0$ depending on sufficiently large $N$. \vspace{3pt}
\item{\it P3 -- Linking Rules.} Each ID $w$ links to IDs in a set of neighbors {\boldmath{$S_w$}}; $|S_w|=O(\log^{\gamma} n)$ for some constant $\gamma>0$.  Any ID may determine the elements in $S_w$ by performing searches.\footnote{For example, in Chord~\cite{stoica_etal:chord}, $w$'s neighbors are (1) the ID counter-clockwise and the ID clockwise to $w$ on the unit ring, and (2) the respective successors of  points $w+\Delta(i)$, where $\Delta(i)$ is an exponentially increasing distance in the ID space for  $i=1, ..., O(\lg N)$. Any ID may verify that $u$ is a neighbor of $w$ via a search on $w+\Delta(i)$ and checking that the result is $u$.} \\
\mbox{~~~~~}There are also $O(\textup{poly}(\log n))$ IDs each of which has $w$ in its respective set of neighbors. Again, any ID may verify this by performing searches. The number of links on which ID $w$ is incident is the \defn{degree} of $w$, and every ID has the same degree asymptotically.\vspace{3pt}


\item {\it P4 -- Congestion Bound.} The \defn{congestion} is $C = O(\log^{c}n/n)$ for a constant $c \geq 0$, where congestion is the maximum probability (over all IDs) that a ID is traversed in a search initiated at a randomly chosen ID for a randomly chosen point in $[0,1)$.\vspace{2pt}
\end{description}

\noindent{\defn{We emphasize that $\G$ is not assumed to provide any security or performance guarantees if there are bad IDs present. Rather, any such $\G$ provides a viable topology that, using our result, can be made to tolerate bad IDs.} Note that many constructions for $\G$ exist such as Chord~\cite{stoica_etal:chord},  the distance-halving construction~\cite{naor:novel}, Viceroy~\cite{malkhi_naor_ratajczak:viceroy}, Chord++~\cite{chordplus:awerbuch}, D2B~\cite{fraigniaud:d2b}, FISSIONE~\cite{li:fissione}, and Tapestry~\cite{zhao_kubiatowicz_joseph:tapestry}.  


\smallskip\smallskip

\noindent{\bf IDs and PoW in Our Construction.} An ID is a virtual participant in the network, and each ID is represented as a value in $[0,1)$ in our construction. Note that adequate precision is obtained using $O(\log n)$ bits. Important properties that our system guarantees are: \vspace{0pt}
\begin{itemize}[leftmargin=12pt]
\item IDs expire after a period of time that can be set by the system designers.
\item A claim to own an ID can be verified by any good ID.
\item The adversary possesses (roughly) at most $\beta n$ IDs, and these IDs are u.a.r. from the ID space $[0,1)$.\vspace{0pt}
\end{itemize}

\noindent We emphasize that our construction does not take these properties for granted; rather they are enforced via a PoW scheme.  However, given space constraints and that the bulk of our results are proved without the need to reference these details, we will assume these properties in Sections~\ref{sec:static} and~\ref{sec:dynamic}. We remove these assumptions in Section~\ref{sec:pow}.

We make the \emph{random oracle assumption}~\cite{bellare1993random}: there exist hash functions, $h$, such that $h(x)$ is uniformly distributed over $h$'s range, when any $x$ in the domain of $h$ is input to $h$ for the first time. We assume that both the input and output domains are the real numbers $[0,1)$. In practice, $h$ may be a cryptographic hash function, such as SHA-2~\cite{sha2}, with inputs and outputs of sufficiently large bit lengths.\smallskip

\noindent{\bf Joins and Departures.} Our work addresses a dynamic system where IDs may join and depart. We delay our description of this aspect until Section~\ref{sec:dynamic}.\smallskip

\noindent We use the following concentration results.\vspace{-0pt}

\begin{theorem}\label{thm:Chernoff}
(Chernoff Bounds~\cite{motwani_raghavan:randomized}) Let $X_1,\dots, X_N$ be independent indicator random  variables such that $\Pr(X_i) = p$ and let $X = \sum_{i=1}^{N} X_i$.  For any $\delta$, where $0 < \delta < 1$, the following holds:\vspace{-8pt}

$$\Pr(X>(1+\delta)\,E[X]) \leq e^{-\delta^2 \,E[X] / 3}\vspace{-3pt}$$ 
$$\Pr(X<(1-\delta)\,E[X]) \leq e^{-\delta^2 \,E[X] / 2}$$
\end{theorem}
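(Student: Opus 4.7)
The plan is to use the standard exponential-moment method (Markov applied to $e^{tX}$). For the upper tail, I would fix a parameter $t>0$ and observe that, by monotonicity of $x\mapsto e^{tx}$, the event $X>(1+\delta)E[X]$ is the same as $e^{tX}>e^{t(1+\delta)E[X]}$, so Markov's inequality yields $\Pr(X>(1+\delta)E[X])\le E[e^{tX}]/e^{t(1+\delta)E[X]}$. Independence of the $X_i$ factors the numerator as $\prod_{i=1}^{N}E[e^{tX_i}]$, which is the only place independence is used.

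Next, I would bound each factor by $E[e^{tX_i}]=1+p(e^t-1)\le \exp\!\bigl(p(e^t-1)\bigr)$, using the standard inequality $1+x\le e^x$. Multiplying and substituting $E[X]=Np$ gives $E[e^{tX}]\le \exp\!\bigl(E[X](e^t-1)\bigr)$, so $\Pr(X>(1+\delta)E[X])\le \exp\!\bigl(E[X]\,(e^t-1-t(1+\delta))\bigr)$. I would then optimize over $t$ by choosing $t=\ln(1+\delta)$, which minimizes the exponent and produces the classical bound $\bigl(e^\delta/(1+\delta)^{1+\delta}\bigr)^{E[X]}$.

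From here, the only remaining work is the analytic inequality $(1+\delta)\ln(1+\delta)-\delta \ge \delta^2/3$ for $\delta\in(0,1)$, which converts the classical bound into the stated form $e^{-\delta^2 E[X]/3}$. I expect this step to be the main (purely calculus) obstacle: the cleanest route is to set $f(\delta)=(1+\delta)\ln(1+\delta)-\delta-\delta^2/3$, check $f(0)=0$, differentiate to get $f'(\delta)=\ln(1+\delta)-2\delta/3$, and verify $f'\ge 0$ on $(0,1)$ via a second derivative or Taylor expansion of $\ln(1+\delta)$; both approaches are short but must be handled carefully at the endpoints.

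For the lower tail, I would run a symmetric argument with $t<0$, applying Markov to $e^{-t X}$. The analogous optimization at $t=-\ln(1-\delta)$ yields $\bigl(e^{-\delta}/(1-\delta)^{1-\delta}\bigr)^{E[X]}$, and the remaining analytic inequality is $(1-\delta)\ln(1-\delta)+\delta \ge \delta^2/2$ on $(0,1)$. This bound is slightly tighter than its upper-tail counterpart, which is exactly what produces the $/2$ in the denominator instead of the $/3$. Since both analytic inequalities are standard and the structural steps are identical to the upper tail, no new ideas are needed beyond the calculus verification.
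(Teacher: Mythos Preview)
The paper does not prove this theorem at all; it is stated as a standard result and attributed to the cited reference~\cite{motwani_raghavan:randomized}. Your proposal is the usual moment-generating-function proof and is correct as outlined, so you are supplying a proof where the paper simply invokes the literature.
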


\begin{theorem}\label{thm:MOBD}
(Method of Bounded Differences, Corollary 5.2 in~\cite{dubhashi:concentration})  Let $f$ be a function of the variables $X_1, ..., X_N$ such that for any $b,b'$ it holds that $|f(X_1, ..., X_i = b, ..., X_N ) - f(X_1, ..., X_i = b', ..., X_N)| \leq c_i$ for $i=1, ..., N$. Then, the following holds:\vspace{-5pt}
$$Pr( f > E[f] + t)  \leq  e^{-2t^2/(\sum_{i} c_i^2)}\vspace{-3pt}$$ 
$$Pr( f < E[f] - t) \leq e^{-2t^2/(\sum_{i} c_i^2)}\vspace{-0pt}$$
\end{theorem}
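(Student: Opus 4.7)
The plan is to prove the Method of Bounded Differences inequality by constructing a Doob martingale and then applying the Azuma--Hoeffding inequality. First I would define the filtration $\mathcal{F}_i = \sigma(X_1,\ldots,X_i)$ and the associated Doob martingale $Y_i = E[f(X_1,\ldots,X_N) \mid \mathcal{F}_i]$ for $i=0,1,\ldots,N$. By construction $Y_0 = E[f]$ and $Y_N = f(X_1,\ldots,X_N)$, so the deviation $f - E[f]$ equals the telescoping sum $\sum_{i=1}^{N}(Y_i - Y_{i-1})$ of martingale increments.

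The central step is to show $|Y_i - Y_{i-1}| \leq c_i$ almost surely. Conditioned on $X_1,\ldots,X_{i-1}$, both $Y_{i-1}$ and $Y_i$ are expectations of $f$ over the remaining coordinates, the only difference being that $Y_i$ has additionally fixed $X_i$ while $Y_{i-1}$ still averages over it. Using independence of the $X_i$'s, I would write $Y_i - Y_{i-1}$ as an expectation of the difference $f(\ldots, X_i, \ldots) - f(\ldots, X_i', \ldots)$, where $X_i'$ is an independent copy with the same distribution as $X_i$ and the remaining coordinates are integrated out. The bounded differences hypothesis then gives a pointwise bound of $c_i$ on the integrand, yielding $|Y_i - Y_{i-1}| \leq c_i$.

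With these increment bounds in hand, I would invoke the one-sided Azuma--Hoeffding inequality on the martingale $Y_0,\ldots,Y_N$, which yields $\Pr(Y_N - Y_0 \geq t) \leq \exp(-2t^2/\sum_i c_i^2)$; the lower-tail bound follows symmetrically by applying the same argument to $-f$. The main obstacle is the martingale increment bound: the bounded differences hypothesis is a statement about perturbing a single coordinate of $f$, and translating this into a bound on differences of conditional expectations requires the independence of the $X_i$'s in an essential way, since otherwise fixing $X_i$ versus averaging it out would also affect the conditional distribution of $X_{i+1},\ldots,X_N$. Once the increment bound is secured, the Azuma step reduces to Hoeffding's lemma applied to each bounded, zero-mean conditional increment, which is standard.
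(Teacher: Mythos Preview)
The paper does not prove this theorem; it is quoted verbatim as a known concentration inequality (Corollary~5.2 of Dubhashi--Panconesi) and used as a black-box tool in the proof of Lemma~\ref{l:smallX}. So there is no in-paper argument to compare against.

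Your proposed argument --- Doob martingale on the filtration generated by successive coordinates, bounding the increments via the bounded-differences hypothesis and independence, then Azuma--Hoeffding --- is exactly the standard textbook proof of McDiarmid's inequality and is correct. Your observation that independence of the $X_i$ is essential for the increment bound is spot on; the theorem statement in the paper omits this hypothesis, but it is present in the cited source and is required for the result to hold.
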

\vspace{-3pt}
\noindent{}Finally, all of our results hold given that $n$ is sufficiently large; we assume this throughout.\vspace{-3pt}


\subsection{Overview and Our Main Result}\label{sec:overview-main}\vspace{-1pt}

As discussed above, reducing \qm~size is desirable, but gives rise to the possibility of bad \qms.  In Section~\ref{sec:static}, we demonstrate how to achieve $1/\textup{poly}{(\log n)}$-robustness with \qms of size $\Theta(\log\log n)$ when there is no churn. This argument leverages the bound on congestion given by the input graph, along with a careful tallying of the fraction of ID space that cannot be securely searched. 

This result is applied in Section~\ref{sec:dynamic} where we show that  $O(1/\textup{poly}{(\log n)})$-robustness can be maintained with churn. A key component of our construction is the use of two graphs (composed of \qms) that, when used in tandem, limit the number of bad \qms that can be formed.

Finally, in Section~\ref{sec:pow}, we describe how PoW is used to provide the guarantees on IDs discussed in Subsection~\ref{sec:model}. The main challenge is defending against an adversary that wishes to store a large number of IDs for use in a massive future attack (i.e., a pre-computation attack).

Our main result is the following: 

\begin{theorem}\label{thm:main}
Assume an input graph $\G$ that satisfies $P1$ - $P4$, and that the adversary has at most a $\beta$-fraction of the computational power, for some sufficiently small but positive constant $\beta$. Then our construction using $|\Q| = O(\log\log n)$ provides the following guarantees w.h.p. over a polynomial number of join and departure events.\vspace{1pt}
\begin{itemize}[leftmargin=12pt]
\item All but an $O(1/\textup{poly}{(\log n)})$-fraction of \qms are good.\vspace{2pt}
\item All but an $O(1/\textup{poly}{(\log n)})$-fraction of IDs can successfully search for all but an $O(1/\textup{poly}{(\log n)})$-fraction of the resources.\vspace{1pt}
\end{itemize}
\noindent That is, we achieve $O(1/\textup{poly}{(\log n)})$-robustness.  
\end{theorem}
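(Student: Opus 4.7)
The plan is to follow the three-stage roadmap outlined in Section~\ref{sec:overview-main}, building from a static analysis up to the full dynamic, PoW-enforced setting. In the static case, my first step would be to show that any single \qm~is bad with probability only $1/\textup{poly}(\log n)$: since $\beta$ is a constant and $|\Q| = \Theta(\ln\ln n)$, the number of bad IDs in a fixed \qm~is the sum of independent indicator variables with mean $\beta|\Q|$, and Theorem~\ref{thm:Chernoff} gives deviation probability $e^{-\Omega(\delta^2 \beta \ln\ln n)} = 1/\textup{poly}(\log n)$, where the exponent can be tuned by the constant $d_1$. Summing over the $n$ \qms upper bounds the expected number of bad \qms by $n/\textup{poly}(\log n)$, and concentration around this expectation follows from Theorem~\ref{thm:MOBD}, exploiting that each ID, by $P3$, belongs to only $\textup{poly}(\log n)$ \qms and so flipping the good/bad status of any one ID changes the bad-\qm~count by at most $\textup{poly}(\log n)$.

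Next I would translate this small bad-\qm~fraction into a small search-failure fraction, which is where $P1$ and $P4$ become essential. A search traverses $D=O(\log N)$ \qms~by $P1$, so a search pair $(w,k)$ fails only if one of these is bad. Fix a bad \qm~$\Q$; by the congestion bound $P4$, the probability that a uniformly random (initiator, key) pair routes through any particular ID is $O(\log^c n/n)$, so the probability the search hits $\Q$ at all is $O(|\Q|\log^c n/n)$. Summing over $n/\textup{poly}(\log n)$ bad \qms gives an expected $O(1/\textup{poly}(\log n))$-fraction of failed search pairs, which once more concentrates via Theorem~\ref{thm:MOBD}, yielding the claimed static $\varepsilon$-robustness. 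A Markov-style double counting then separates this bound into ``most initiators'' and ``most keys,'' matching the two-level statement of the theorem.

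Lifting to the dynamic case is where I expect the main difficulty: an adaptive adversary can schedule joins and departures so that its IDs pile up inside a few targeted \qms, and the slack at $|\Q|=O(\ln\ln n)$ is far thinner than in the classical $O(\log n)$ regime. My plan is to invoke the two-graph construction promised in Section~\ref{sec:dynamic}: one graph handles the initial placement of an ID via $P2$, while a second independent graph re-randomizes the \qms~the ID is assigned to, so any attempt to concentrate bad IDs is diluted across many fresh random placements the adversary cannot anticipate. I would then run a potential-function argument that charges each join/departure at most $O(1)$ newly bad \qms~in expectation, and uses the $n/\textup{poly}(\log n)$ headroom from the static bound to absorb a polynomial number of events via a union bound together with Theorem~\ref{thm:MOBD}. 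Finally, the Section~\ref{sec:pow} ingredient enforces, via PoW, that (i) at every moment the adversary holds at most a $(1+o(1))\beta$-fraction of valid IDs and (ii) those IDs are distributed u.a.r.\ in $[0,1)$ by the random-oracle assumption on $h$; the delicate point is the pre-computation attack, which I would neutralize by tying ID validity to a time-expiring challenge so that stockpiled IDs expire before the adversary can deploy them in concert, restoring the independence needed for the static concentration bounds to apply at every step of the dynamic process.
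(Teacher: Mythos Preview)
Your static analysis is close in spirit to the paper's, though the paper frames it slightly differently: it assigns each \qm~$G_v$ a \emph{responsibility} $\rho(G_v)=O(\log^c n/n)$ directly from P4 (one \qm~per vertex of $H$, so no $|\Q|$ factor), defines $X=\sum_v X_v$ with $X_v=\rho(G_v)\mathbf{1}[\text{$G_v$ red}]$, and applies Theorem~\ref{thm:MOBD} over the per-\qm~red indicators rather than over per-ID bad indicators. Your decomposition would also work, but note that the paper treats ``red independently with probability $p_f$'' as a modeling property (S2) rather than deriving it from ID-level Chernoff at this stage.

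The real gap is in the dynamic step. You have misidentified what the two \qm~graphs are for. They are \emph{not} a placement-plus-rerandomization pair; both $\QG_1^{j-1}$ and $\QG_2^{j-1}$ play the \emph{same} role, and every search used to build a new \qm~or neighbor link is executed in \emph{both} old graphs. The point is that a membership or neighbor search fails only if it fails in both, so the per-search failure probability drops from $q_f$ to $q_f^2$. This squaring is the entire mechanism that prevents error accumulation across epochs: Lemma~\ref{lem:good} and Lemma~\ref{lem:confused} bound the probability a new \qm~is bad or confused by $O(q_f^2\,\textup{poly}(\log n))+1/\log^{d'}n$, and Lemma~\ref{lem:dyn-aer} then closes the loop by showing this stays below $1/\log^k n$ for suitable $k>2c+\gamma$. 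Without the squaring, $p_f^{j+1}$ strictly exceeds $p_f^{j}$ and eventually blows past $1/\log^k n$; the paper says this explicitly.

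Your proposed potential-function argument---charging each join/departure $O(1)$ new bad \qms~and absorbing polynomially many events into the $n/\textup{poly}(\log n)$ headroom---does not work and is not what the paper does. After $n^2$ events you would have far more bad \qms~than the budget allows. The paper instead rebuilds both \qm~graphs \emph{from scratch} every epoch (with churn within an epoch bounded by assumption so that good \qms~retain their good majority), so bad \qms~never accumulate across epochs; the inductive invariant is simply ``the old graphs are $\varepsilon$-robust $\Rightarrow$ the new graphs are $\varepsilon$-robust,'' and a union bound over polynomially many epochs finishes. Your PoW sketch is essentially right in outline (time-expiring randomness to defeat pre-computation), but you are missing the two-hash composition $f\circ g$ that forces adversarial IDs to be u.a.r.\ in $[0,1)$ rather than clustered.
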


\noindent To illustrate our cost improvements, we also establish:

\begin{corollary}\label{cor:main}
Using any of the constructions in~\cite{fraigniaud:d2b},~\cite{malkhi_naor_ratajczak:viceroy}, or~\cite{naor:novel} as an input graph $\G$, our result gives the following bounds on cost.\vspace{1pt}
\begin{itemize}[leftmargin=12pt]
\item Group communication incurs $O(\textup{poly}{(\log\log n)})$ messages.\vspace{2pt}
\item Secure routing incurs $O(D\, \textup{poly}{(\log\log n)} )$ messages.  \vspace{2pt}
\item State maintenance is expected $O(\textup{poly}{(\log\log n)})$ per ID. \vspace{1pt}
\end{itemize}
\end{corollary}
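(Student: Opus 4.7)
The plan is to combine the bound $|\Q| = \Theta(\log\log n)$ from Theorem~\ref{thm:main} with the per-item cost formulas (i)--(iii) laid out in Section~\ref{sec:intro}, and to exploit that each of the three named input graphs has constant per-vertex degree in the linking rule P3 so that no extra $\log n$ factor creeps in. I would address the three bullets of the corollary in turn.

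First, for group communication, an all-to-all exchange inside a single group of size $|\Q|=O(\log\log n)$ costs $\Theta(|\Q|^2)=O(\poly{\log\log n})$ messages by item (i) of the introduction.

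Second, for secure routing, Property P1 gives a search path of length $D$ over IDs, which lifts to $O(D)$ group-to-group hops in the simulated overlay; each hop is an all-to-all exchange between two consecutive groups at cost $O(|\Q|^2)$, so the total is $O(D \cdot \poly{\log\log n})$.

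Third, for state maintenance I would argue in two sub-steps. (a) Each of Viceroy~\cite{malkhi_naor_ratajczak:viceroy}, D2B~\cite{fraigniaud:d2b}, and the distance-halving construction~\cite{naor:novel} is a constant-degree graph, so $|S_w|=O(1)$ in P3 and thus each group has $O(1)$ neighboring groups in the simulated overlay. (b) By linearity of expectation over the $n$ groups, each of size $\Theta(\log\log n)$ and placed u.a.r.\ on the ID ring, a fixed ID $w$ belongs to $\eta=\Theta(\log\log n)$ groups in expectation. For each such group, $w$ stores link information on its $O(|\Q|)$ co-members and on $O(|S_w|)\cdot O(|\Q|)=O(|\Q|)$ members of neighboring groups, for per-group cost $O(|\Q|)$. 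Summing, the expected state per ID is $\eta\cdot O(|\Q|)=O((\log\log n)^2)=O(\poly{\log\log n})$.

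The main obstacle is sub-step (b): the $\Theta(\log\log n)$ bound on $\eta$ rests on the u.a.r.\ placement of IDs in $[0,1)$, which in turn depends on the random-oracle-based PoW scheme of Section~\ref{sec:pow} holding under adversarial ID generation. Once that is in hand, concentration of $\eta$ around its mean follows from Theorem~\ref{thm:Chernoff}, and the first two bullets reduce to direct substitution of the $|\Q|$ and $D$ bounds from Theorem~\ref{thm:main}.
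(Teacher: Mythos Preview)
Your treatment of the first two bullets matches the paper: group communication is $O(|\Q|^2)=O((\log\log n)^2)$, and secure routing is $O(D\,|\Q|^2)$ via all-to-all exchanges along a length-$D$ path.

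For the third bullet, your high-level decomposition (expected $\eta=\Theta(\log\log n)$ memberships times $O(|\Q|)$ state per membership, plus $O(1)$ neighboring groups from the constant expected degree of the three named constructions) is exactly how the paper argues. However, the paper does not derive $\eta$ directly from linearity of expectation; it invokes Lemma~\ref{lem:link-cost}, and that lemma does strictly more work than your sketch. Your counting argument bounds only the \emph{legitimate} membership requests arising from the $h_1(w,i)$ hashing. It does not address the adversary sending bogus group-membership or neighbor requests to $w$ in an attempt to inflate $w$'s state. Lemma~\ref{lem:link-cost} handles this via the verification mechanism of Section~\ref{sec:join-dep}: each request is checked by two searches in the old group graphs, and the $\varepsilon$-robustness of those graphs bounds the expected number of erroneously accepted requests by $O(1)$ (memberships) and $o(|L_w|)$ (neighbors). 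So the obstacle you flag is misplaced: the issue is not whether PoW forces bad IDs to be u.a.r.\ in $[0,1)$, but whether the adversary can trick $w$ into accepting many spurious requests, and that requires the verification analysis rather than a Chernoff bound on $\eta$.

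A minor point: the three cited input graphs have $O(1)$ \emph{expected} degree, not deterministic constant degree, which is why the state bound is stated in expectation.
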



\noindent Note that these are substantial improvements over the costs described in Section~\ref{sec:intro}.\smallskip\smallskip

\noindent{\bf Can we do better?} We offer some intuition for why significantly improving on our result seems unlikely.  With $|\Q|=\Theta(\log\log n)$, the probability of a bad \qm~is roughly $1/\textup{poly}(\log n)$. All constructions with $o(\log n)$ degree requires $D=\omega(\log n/\log\log n)$ IDs to be traversed in a search. Thus, the probability of encountering a bad \qm~along the  path of search is (roughly) at most $\sum_{1}^{D} 1/\textup{poly}(\log n)$ by a union bound, and this can be {\it less than $1$}.

Now, consider a smaller \qm size of $o(\log\log n / \log \log \log n)$. Then, the probability of a bad \qm~is $\omega(\log \log n/\log n)$, and over $D$ IDs traversed, a union bound no longer  bounds the probability of a failed search by less than $1$. 


In this sense, our choice of $|\Q|$ appears to be pushing the limits of what is possible when using \qms to design attack-resistant systems.\vspace{4pt}


\section{The Static Case}\label{sec:static}\vspace{2pt}
 

We first prove a result for searches without dealing with ID joins and departures. \vspace{-0pt}  

\subsection{The \Qm~Graph}\label{sec:groupgraph} \vspace{2pt}


Given an input graph $\G$, our $\epsilon$-robust construction is a \defn{\qm~graph} $\QG$ where each ID $w$ in $\G$ corresponds to \qm~$\Q_w$.  


We refer to each group in $\QG$ as \defn{blue} or \defn{red}. A blue group corresponds to a good \qm~with its neighbors correctly established, while a red group corresponds to a bad \qm~{\it\underline{or}} a \qm~that has an incorrect neighbor set.    When addressing the group graph, we use the notation $G_w$ to denote a vertex in $\QG$; however, $G_w$ may also refer to the group with leader $w$, and this will be made clear from the context.


In $\QG$, each blue \qm $G_w$ has a \defn{neighbor set}, {\boldmath{$L_w$}}, where for each $u$ that is a neighbor of $w$ in $H$, the group $G_u$ exists in $L_w$; each red group has an arbitrary neighbor set determined by the adversary.  A  search in $\QG$ proceeds over edges in $\QG$ as it would in $\G$ with the corresponding \qm members participating in the search; this is illustrated in Figure~\ref{fig:input}. For an edge $(\Q_w, \Q_v)$ between two blue groups in $\QG$, there are all-to-all links between (at least) the good members in $\Q_w$ and $\Q_v$. 
 
The following properties of $\QG$ are useful for our analysis: \vspace{3pt}
\begin{itemize}[leftmargin=10pt]
\item{\it S1.} For each ID $v$ in $H$, there is a group $G_v$ in $\QG$ where the leader $v$ of $\QG$ has the same ID in both graphs.\vspace{4pt}
\item{\it S2.} Each  group in $\QG$ is red independently with probability {\boldmath{$\pbad \leq 1/\log^k n$}} for a tunably large constant $k>0$ depending only on $d_1$; and blue otherwise. \vspace{4pt}
\item{\it S3.} Edges incident to blue groups are set according to the topology of $H$. All other edges are set by the adversary. \vspace{4pt}
\end{itemize}


The value of $\pbad$ in S2 corresponds to the probability that a \qm is bad or does not have the correct neighbor set. To provide intuition for our bound on $\pbad$, note that if we select $\Theta(\log\log n)$ IDs 
u.a.r., then the probability that a majority are bad is $O(1/\textup{poly}(\log n))$ by a Chernoff bound. A similar bound can be derived on the probability of incorrectly setting up neighbors; there is a subtlety with respect to bounding this probability, and this is discussed in our online version~\cite{JaiyeolaPSYZ17}, Appendix~\ref{app:initialization}.  Keeping $\pbad$ upper bounded by $1/\log^k n$ when IDs can join and depart is non-trivial.  We show how to do this in Section~\ref{sec:dynamic_case}.

Since the adversary controls all red groups, it is free to insert or delete edges between red groups (cf. $S3$).  However, edges involving at least one blue group are not modified.  This is because the adversary cannot modify a  good \qm's knowledge of who its neighbors are, since that knowledge is kept consistent by the good majority, although the neighboring bad \qm~may certainly ignore or corrupt incoming messages from that good \qm. \smallskip

\begin{figure}[t]\vspace{-0.6cm}
\captionsetup[subfigure]{labelformat=empty}
\hspace{0.5cm}\begin{subfigure}{0.6\textwidth} 
\vspace{0.7cm}\includegraphics[width=0.7\textwidth]{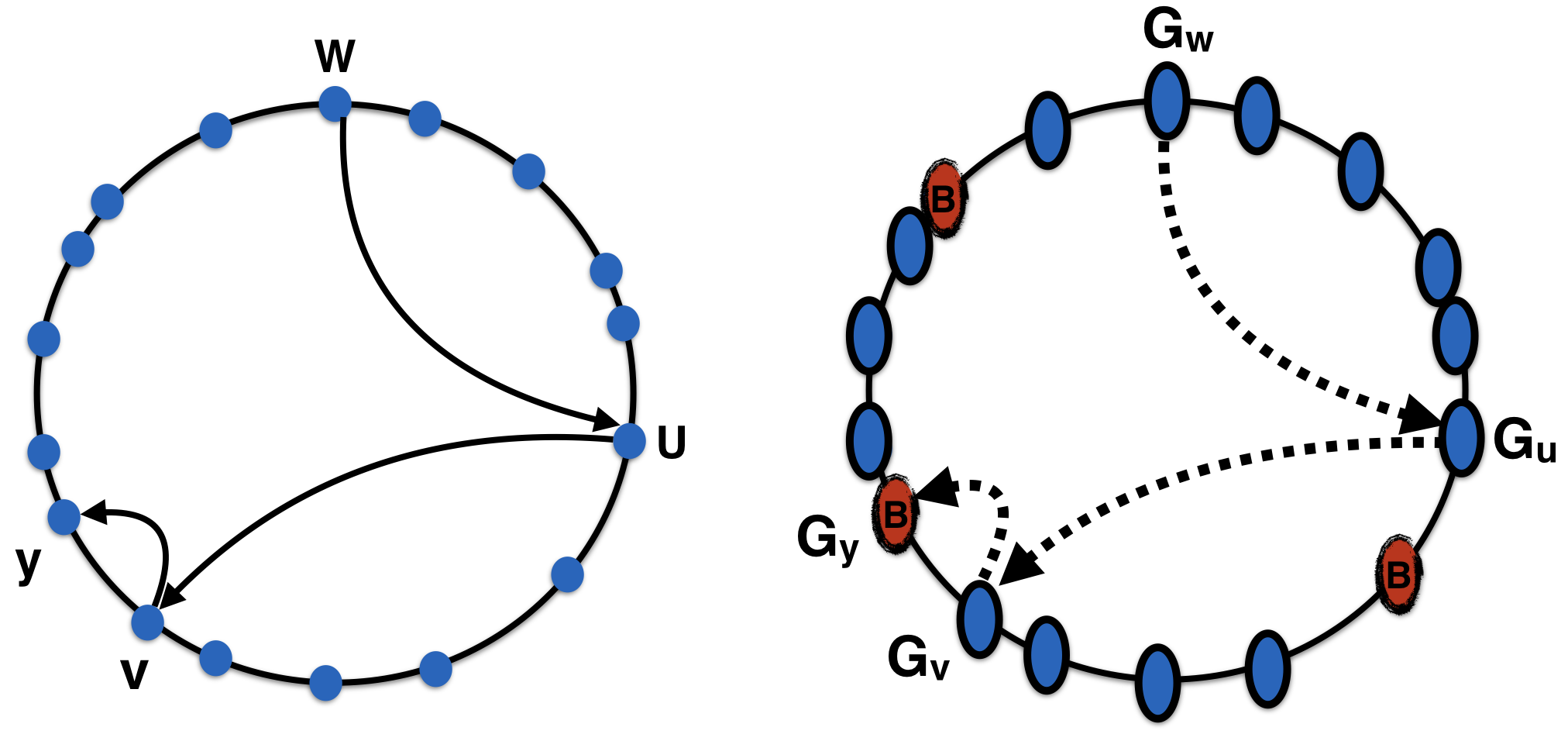} 
\end{subfigure}\hspace{0cm}
\vspace{-2pt}\caption{{\small Left: An input graph $\G$ with IDs $w, u, v,$ and $y$. Illustrated is a search initiated at $w$ and terminating at $y$; this search also traverses $u$ and $v$. Right: A \qm~graph with corresponding groups $\Q_w, \Q_u, \Q_v,$ and $\Q_y$. Red groups are marked with a ``B''. Large dashed arrows represent all-to-all links between (at least) good members of the corresponding groups.}} \label{fig:input}
\vspace{-10pt}
\end{figure}





\noindent{\bf Overview of Analysis.}   A search in group graph $\QG=(V,E)$ is said  to \defn{fail} if it traverses any red group.  Otherwise, the search will \defn{succeed}.


In $\QG$, consider the path of a search that begins at the initial group and halts either upon succeeding or encountering the first red group (in which case the search fails); we call this a \defn{\spath}.  
 
For any group $\Q_v$, we define the $\defn{responsibility}$ of $\Q_v$ to be the probability that a \spath in $\QG$ from a random  group to a random point in $[0,1)$ traverses $\Q_v$.   We denote the responsibility of group $\Q_v$ by {\boldmath{$\rho(G_v)$}}. 

Why is responsibility defined in terms of \spath{}s? The issue is that responsibility is not well-defined after a search encounters the first red group since the adversary may redirect a search to any red group after this point. For example, the adversary may have the same red group traversed by multiple different searches, thus arbitrarily inflating the number of searches that traverse this red group. This motivates  the notion of a \spath.  



\renewcommand{\baselinestretch}{1.0}
\subsection{Analysis}

\begin{lemma} \label{lem:probNode}
The following holds, \whp, for any  group $\Q_v$: $\rho(G_v) = O(\log^{c}n /n)$. 
\end{lemma}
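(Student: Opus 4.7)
The plan is to reduce the claim to the congestion bound $P4$ on the input graph $\G$. The key observation is that a \spath in $\QG$ is essentially an initial segment of a search in $\G$: for as long as the search in $\QG$ passes through blue groups, it mirrors the corresponding search in $\G$ step for step.

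To make this precise, I would fix any start ID $s$ and any key $p\in[0,1)$ and let $s=v_0,v_1,\ldots,v_D$ denote the (deterministic) sequence of IDs traversed by $\G$'s search algorithm on input $(s,p)$, of length $D=O(\log N)$ by $P1$. By $S3$, each blue group $G_{v_i}$ has its neighbor set $L_{v_i}$ set exactly according to the edges of $\G$; and by the good-majority property of blue groups, the non-faulty members of $G_{v_i}$ can correctly identify and forward the search to $G_{v_{i+1}}$ via group-internal agreement. Hence the search in $\QG$ visits $G_{v_0},G_{v_1},\ldots$ in lockstep with $\G$ until the first red group is hit, at which point the \spath terminates by definition.

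In particular, the event ``the \spath in $\QG$ traverses $G_v$'' is contained in the event ``$v$ is traversed by the search in $\G$ on input $(s,p)$.'' Taking $s$ and $p$ u.a.r.\ and invoking $P4$,
\[
\rho(G_v)\;\le\;\Pr[\,v\text{ is traversed by a random search in }\G\,]\;\le\;C\;=\;O(\log^{c}n/n),
\]
which yields the lemma. The \whp qualifier is inherited from $P4$, which itself holds only w.h.p.\ over the random placement of IDs in $[0,1)$; given that, the bound on $\rho(G_v)$ is deterministic.

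The hard part will be justifying the step-for-step coupling between the two searches, i.e., arguing that every blue group on the path reliably executes the next routing step despite its up to $(1+\delta)\beta|\Q|$ bad members. This reduces to Byzantine agreement inside a single group of size $\Theta(\log\log n)$ with a non-faulty majority, which is standard, and is precisely what the ``good group'' definition was engineered to support. Once this coupling is in hand, no union bound or concentration inequality is needed at this stage: the responsibility bound follows from a single containment-of-events step against $P4$.
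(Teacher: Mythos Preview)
Your proposal is correct and follows essentially the same approach as the paper: both argue that a \spath in $\QG$ is a prefix of the corresponding search path in $\G$ (since blue groups route exactly as $\G$ dictates and the \spath halts at the first red group), so the event ``$G_v$ is traversed'' is contained in the event ``$v$ is traversed in $\G$,'' and $P4$ finishes the job. The paper's proof adds one remark you leave implicit---that the adversary's extra red-to-red edges cannot affect the \spath because it halts at the first red group---while you spell out more carefully the BA-based justification for why blue groups faithfully execute each routing step; these are minor presentational differences, not substantive ones.
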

\begin{proof}
By property P4, for any vertex $v$ in $\G$, w.h.p. any search initiated at a random vertex for a random point traverses $v$ with probability $C = O(\log^{c}n/n)$. The corresponding \spath in $\QG$ terminates either when it is successful or when the first red group is encountered; therefore, the \spath is always a subpath of the corresponding path of a search in $\G$. Also, note that any extra edges added between red groups in the $\QG$ (due to $S3$) do not affect how the \spath proceeds given that the \spath terminates at the first red group. Consequently, w.h.p., the corresponding group $\Q_v$ is traversed with probability $O(\log^{c}n/n)$.
\end{proof}




\noindent Let {\boldmath{$X$}} be a random variable that is the probability that a search that begins at a randomly-chosen \qm for a random point in $[0,1)$ fails.  The randomness of $X$ depends on which \qms are red.

\begin{lemma}\label{lem:expected}
With high probability, $E(X) = O(\pbad  \log^c n)$. 
\end{lemma}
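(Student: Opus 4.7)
The plan is to express $X$ exactly as a sum of responsibilities over the red groups and then bound the resulting expectation term by term using Lemma~\ref{lem:probNode}.

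First, I would argue the identity $X = \sum_{v:\,G_v\text{ red}}\rho(G_v)$. By construction, a \spath either succeeds (terminating at its intended destination) or fails by terminating at the first red group it encounters. Hence the search fails if and only if its \spath terminates at some red group, and for distinct red groups these termination events are disjoint. Moreover, for a red $G_v$, ``\spath terminates at $G_v$'' coincides with ``\spath traverses $G_v$'' (the \spath cannot pass through a red group without halting), the latter having probability $\rho(G_v)$ by definition. Summing these disjoint probabilities yields the claimed identity.

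Next, I would take expectations over the randomness of which groups are red (by $S2$, each $G_v$ is red independently with probability at most $\pbad$) and apply linearity to get
$$
E(X) \;=\; \sum_{v} E\bigl[\mathbf{1}[G_v\text{ red}]\cdot \rho(G_v)\bigr].
$$
By Lemma~\ref{lem:probNode}, w.h.p.\ (over the randomness of the input graph $\G$ inherited from property $P4$), $\rho(G_v) = O(\log^{c} n / n)$ simultaneously for every $v$; and crucially, this bound holds regardless of the red/blue configuration, because a \spath is always a subpath of the corresponding $\G$-search, so extra red groups can only shorten it. Conditioning on this high-probability event, each term is at most $\pbad \cdot O(\log^{c} n / n)$, and summing over the $n$ groups yields $E(X) = O(\pbad \log^{c} n)$, matching the claimed bound.

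The main obstacle I anticipate is the correlation between the two factors inside the expectation: $\rho(G_v)$ depends on where the other red groups are located, since they can truncate the \spath before it reaches $G_v$, so $\rho(G_v)$ and $\mathbf{1}[G_v\text{ red}]$ are not independent and a naive factoring of the expectation is invalid. The remedy is precisely Lemma~\ref{lem:probNode}, which supplies a configuration-independent upper bound on $\rho(G_v)$; this lets me pull the bound out of each expectation without invoking independence, reducing the whole argument to a union bound.
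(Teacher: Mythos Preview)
Your proof is correct and matches the paper's approach: define $X_v = \mathbf{1}[G_v\text{ red}]\cdot\rho(G_v)$, bound $X$ by $\sum_v X_v$, apply linearity, and use the configuration-independent bound from Lemma~\ref{lem:probNode} together with $\Pr[G_v\text{ red}]\le \pbad$ to get $\sum_v E(X_v)=O(\pbad\log^c n)$. The only cosmetic difference is that the paper writes $X \le \sum_v X_v$ and moves on, whereas you argue the sharper equality via disjointness of termination events; your version is more detailed (in particular, you spell out why the correlation between $\rho(G_v)$ and the red indicator is harmless), but the underlying mechanism is identical.
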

\begin{proof}
For any group $\Q_v$, let the random variable $X_v = \rho(G_v)$ if $\Q_v$ is red, and $0$ otherwise.  By the definition of $X$, it is the case that $X \leq \sum_v X_v$.  By linearity of expectation, $E(\sum_v X_v) = \sum_v E(X_v)$.  Then, by Lemma~\ref{lem:probNode}, and the fact that each group is red with probability at most $\pbad$,  $\sum_v E(X_v) = O(\pbad  \log^c n)$.
\end{proof}



\begin{lemma} \label{l:smallX} 
With high probability, $X = O(\pbad \log^c n)$. 
\end{lemma}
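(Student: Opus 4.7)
The plan is to apply the Method of Bounded Differences (Theorem~\ref{thm:MOBD}) to $X$, viewed as a deterministic function $f(Y_1,\ldots,Y_n)$ of the group-color indicators $Y_v \in \{0,1\}$, with $Y_v = 1$ iff $\Q_v$ is red. By property $S2$ these indicators are mutually independent, and Lemma~\ref{lem:expected} already supplies $E[X] = O(\pbad \log^c n)$; what remains is to show concentration around this expectation.

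The key step is to establish per-coordinate Lipschitz constants $c_v = \rho(G_v)$. The approach is to fix the source-target randomness for a single search and observe that the underlying $\G$-search path depends only on $\G$ and the chosen (source, target) pair -- not on the coloring at all -- while the corresponding \spath in $\QG$ is just the prefix of that $\G$-path up to the first red group. Thus flipping $Y_v$ while holding all other $Y_u$ fixed can alter the failure outcome of a given (source, target) pair only if the $\G$-search path for that pair passes through $v$ and no earlier vertex on that path is red. The probability mass of such pairs is, by definition, at most $\rho(G_v)$, and each affected pair's failure indicator can change by at most $1$; hence $|f(\ldots,Y_v=0,\ldots)-f(\ldots,Y_v=1,\ldots)|\le \rho(G_v)$. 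Combining with Lemma~\ref{lem:probNode}, we may take $c_v = O(\log^c n / n)$ uniformly across all $n$ coordinates, w.h.p.\ over the choice of $\G$.

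Plugging in gives $\sum_v c_v^2 = n \cdot O(\log^{2c} n / n^2) = O(\log^{2c} n / n)$. Applying Theorem~\ref{thm:MOBD} with $t$ a sufficiently large constant multiple of $E[X]$ produces exponent $-2t^2/\sum_v c_v^2 = -\Omega(\pbad^2 n)$. In the regime of interest, $\pbad = \Omega(1/\textup{poly}(\log n))$, so $\pbad^2 n = \omega(\log n)$ and the tail probability is $n^{-\omega(1)}$; chained with the bound on $E[X]$ from Lemma~\ref{lem:expected}, this yields $X = O(\pbad \log^c n)$ with probability at least $1-1/n^{c'}$ for any desired constant $c'\ge 1$.

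The hard part is the Lipschitz step. Since the very definition of a \spath depends on the present coloring, a careless ``old vs.\ new'' comparison entangles \spath{}s from two different colorings and can double-count the pairs whose outcomes change. The clean fix, as sketched above, is to condition on the coloring-independent $\G$-search path and use the fact that every \spath is a prefix of that $\G$-path, which reduces the effect of a single flip back to $\rho(G_v)$ -- already controlled by the congestion bound $P4$ via Lemma~\ref{lem:probNode}. Once that bound is in hand, the concentration itself is a one-line invocation of MOBD.
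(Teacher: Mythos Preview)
Your proof is correct and uses the same core machinery as the paper (MOBD with per-coordinate Lipschitz constants $O(\log^c n/n)$ coming from the congestion bound), but the paper takes a shortcut that sidesteps your careful Lipschitz analysis: instead of applying MOBD to $X$ directly, the paper applies it to the upper bound $f=\sum_v X_v$ where $X_v=\rho(G_v)$ if $\Q_v$ is red and $0$ otherwise. For this sum, flipping one coordinate changes $f$ by exactly $\rho(G_v)\le O(\log^c n/n)$, so the Lipschitz step is immediate and there is no need to reason about how a single color flip interacts with \spath{}s. Your route --- bounding $X$ itself and arguing that a flip of $Y_v$ can only affect (source, target) pairs whose $\G$-path reaches $v$ before any other red group --- is more work but also more transparent about \emph{why} the coloring-dependence of \spath{}s does not hurt; in fact your prefix observation is exactly what justifies the paper's otherwise terse use of $\rho(G_v)$ as the Lipschitz constant.

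One small quibble: you invoke ``$\pbad=\Omega(1/\textup{poly}(\log n))$ in the regime of interest'' to get $\pbad^2 n=\omega(\log n)$, but property $S2$ only gives an \emph{upper} bound $\pbad\le 1/\log^k n$. The clean fix (which the paper also uses, if somewhat loosely) is to set the deviation $t=\Theta(1/\log^{k-c} n)$ directly rather than $t=\Theta(\pbad\log^c n)$; then the exponent becomes $\Omega(n/\log^{2k} n)$ regardless of the actual value of $\pbad$, and since $E[X]=O(\pbad\log^c n)\le O(1/\log^{k-c} n)$ the conclusion $X=O(1/\log^{k-c} n)=O(\pbad\log^c n)$ follows.
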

\begin{proof}
For any group $\Q_v$, let $X_v=\rho(G_v)$ if $\Q_v$ is red, and $0$ otherwise.  We will bound $\sum_v X_v$, which we again note is always at least as large as $X$.  Let $f(X_1, ..., X_n) = \sum_v X_v$, where we index the groups by $1$ to $n$.  By Lemma~\ref{lem:probNode}, for any $X_i$,  $|f(..., X_i = x, ...) - f(..., X_i = x', ...)| =$ $O(\log^c n/n)$.  Thus, we can apply Theorem~\ref{thm:MOBD}  with $c^2_i = O(\log^{2c} n / n^2)$ for all $1 \leq i \leq n$. We have that:\vspace{-5pt}

$$Pr(|X - E(X)| \geq \lambda) \leq e^{-2\lambda^2 n/( d \log^{2c} n)}$$

\noindent for some constant $d>0$. Let $\epsilon>0$ be an arbitrarily small constant, and set $\lambda = \epsilon \pbad \log^c n$, which is at least $\epsilon /\log^{k-c}n$, by S2.  Then, we have: \vspace{-5pt} 

$$Pr( |X - E(X)| \geq \lambda) \leq e^{-2\epsilon^2  n / (\log^{2k} n)}$$


\noindent By Lemma~\ref{lem:expected}, \whp, $E(X) \leq d' \pbad  \log^c n$ for some constant $d'>0$, which gives the result.  \end{proof}


\begin{lemma}\label{lem:static-aer}
With high probability, any search from a random group to a random point in $[0,1)$ succeeds with probability $1- O(1/\log^{k-c} n)$. 
\end{lemma}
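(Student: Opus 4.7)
The plan is to observe that this lemma is essentially a direct corollary of Lemma~\ref{l:smallX} combined with property $S2$ from the definition of the group graph $\QG$. The random variable $X$ was defined precisely as the failure probability of a search initiated at a uniformly random group for a uniformly random target point in $[0,1)$, so the success probability of such a search is simply $1 - X$. Hence it suffices to upper bound $X$ with high probability.

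First, I would invoke Lemma~\ref{l:smallX} to obtain that, with high probability over the assignment of blue/red labels to groups, $X = O(\pbad \log^c n)$. Next, I would substitute the bound from property $S2$, namely $\pbad \leq 1/\log^k n$ for the tunable constant $k$ (which depends on $d_1$ and can be made as large as desired by choosing $|\Q| = \Theta(\log \log n)$ with a sufficiently large hidden constant). Combining these two bounds yields
\[
X \;=\; O\!\left(\frac{\log^c n}{\log^k n}\right) \;=\; O\!\left(\frac{1}{\log^{k-c} n}\right)
\]
with high probability. The success probability of a random search is therefore $1 - X = 1 - O(1/\log^{k-c} n)$, as claimed.

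The main technical work was already carried out in Lemma~\ref{l:smallX} via the Method of Bounded Differences, so there is no real obstacle at this step. The one item worth mentioning explicitly is that the high-probability events used here are over the randomness of which groups are red (equivalently, the randomness of the ID placements and adversarial choices encoded in $S2$), and since both Lemma~\ref{lem:probNode} (which gives the per-group responsibility bound underlying $S3$ of the MOBD argument) and Lemma~\ref{l:smallX} hold w.h.p.\ over this same source of randomness, a single union bound absorbs them into the final w.h.p.\ statement without degrading the exponent. Choosing $k > c$ (which is unconstrained by the input graph since $k$ can be tuned via $d_1$) then gives the promised $O(1/\textup{poly}(\log n))$ success guarantee.
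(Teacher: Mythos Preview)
Your proposal is correct and matches the paper's own proof essentially verbatim: invoke Lemma~\ref{l:smallX} to get $X = O(\pbad \log^c n)$ w.h.p., then substitute $\pbad \leq 1/\log^k n$ from $S2$ to conclude $X = O(1/\log^{k-c} n)$. The paper's proof is a single sentence doing exactly this; your additional remarks about the union bound over the w.h.p.\ events are fine but not something the paper bothers to spell out.
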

\begin{proof}
By Lemma~\ref{l:smallX}, for $n$ sufficiently large, w.h.p. $X = O(\pbad \log^c n) = O(1/\log^{k-c} n)$, given that $\pbad \leq 1/\log^k n$ by S2. 
\end{proof}






\comment{
\subsection{Experiments}

\begin{figure}[t]
\vspace{-0cm}
\captionsetup[subfigure]{labelformat=empty}
\centering
\hspace{-1cm}\begin{subfigure}{0.5\textwidth} 
\includegraphics[width=1\textwidth]{plot.png} 
\vspace{-0.5cm}
\end{subfigure}
\caption{Evaluation of Distance-Halving Construction~\cite{naor:novel} and Linearized De Bruijn network~\cite{richa:self-stabilizing} in static case.}\label{fig:exp} 
\vspace{-10pt}
\end{figure}

To evaluate our solution in the static case, we simulated two distributed hash tables (DHTs): the Distance-Halving (DH)~\cite{naor:novel} and Linearized De-Bruijn (LDB)~\cite{richa:self-stabilizing} constructions. Our goal is to gain a rough understanding of how much of the network remains reachable through secure routing. 

DH is interesting because it offers $O(1)$ expected degree, $D=O(\log n)$, and has w.h.p. expected congestion $O(\log^2 n/n)$ (see~\cite{chordplus:awerbuch}).  To our knowledge, LDB  does not have a non-trivial congestion bound, but it offers
 $O(1)$ degree (not just in expectation) with $D=O(\log n)$. This degree bound is useful since  all \qms~have $\textup{poly}(\log\log n)$ state overall in the \qm~graph, whereas with DH, a small number of \qms~will have $\Theta(\log n)$ degree and thus have  $\textup{poly}(\log n)$ state.
 

We use $|\Q|=24\ln\ln n$ and $\beta = 1/16$. The probability of a bad majority is then a function of these variables calculated via a Chernoff bound. We note that for $n=8,192$ the predicted quorum size for our solution of $$ is no worse than


This yields a probability $\approx 0.33$ for $n=3,000$ and a probability $\approx 0.0003$ for $n=30,000$. These parameters $c$ and $k$ translate to $|\Q| = $.


We measured the fraction of \qms~that are routable by a search operation from a single source \qm; that is, no bad \qm~is encountered on the search path from the source to the destination. We tested $15$ DHTs per $n$ value and a sample size of $15$ source \qms~per DHT was used. 

The experimental results are plotted in Figure~\ref{fig:exp} with $95$\% confidence intervals. The simulation was implemented with the Java programming language using.an Acer Aspire laptop with Windows 10 OS, an Intel Core i5 CPU, and 8GB RAM. 


For the DH construction, the fraction of routable \qms at $n = 3,000$ \qms~was $0.0144$ while the fraction of routable \qms~at $n = 30,000$ \qms was $0.99$. Therefore, as $n$ increases, this fraction of routable \qms  improves dramatically. Similarly, for the LDB network, the fraction of routable \qms at $n = 3000$ \qms~was $0.0516$ while the fraction of routable \qms~at $n = 30,000$ \qms~was $0.97$. 
}


\section{The Dynamic Case}\label{sec:dynamic}

We now consider the case where IDs can join and depart. We still make the assumptions about IDs described in Section~\ref{sec:model}. Time is divided into disjoint consecutive windows of {\boldmath{$T$}} \defn{steps} called \defn{epochs} indexed by $j\geq 1$. 

\smallskip

\noindent{\bf Model of Joins and Departures.} We assume that $n$ IDs are always present even under churn; that is, when an ID departs, another is assumed to join; this is a popular model considered in much of the previous literature on tolerating $\Theta(n)$ Byzantine faults; for  example,~\cite{datar:butterflies,HK,naor_wieder:a_simple,fiat:making,fiat_saia:censorship,awerbuch_scheideler:group,awerbuch:towards,awerbuch:random,awerbuch:towards2}. Additionally, our results hold when the system size is $\Theta(n)$  -- that is, the size changes by a constant factor -- but we omit these details in this extended abstract.

Recall that a good group $\Q$ contains at most a $(1+\delta)\beta$-fraction of bad IDs where $\beta>0$ is a sufficiently small constant, and $\delta>0$  is a tunably small constant depending only on sufficiently large $n$. We assume the following: in any epoch, at most an $(\epsilon'/2)$-fraction of good IDs depart any group, where $\epsilon' = 1 - 2(1+\delta)\beta$. This value of $\epsilon'$ ensures that a good group retains a good majority over its lifetime.  For ease of exposition in our analysis of the dynamic case, we revise our definition of a good group to be: a group $\Q$ that begins with size $d_1\ln\ln n \leq |\Q| \leq d_2\ln\ln n$ and with at most a $(1+\delta)\beta$-fraction of bad IDs, and retains a good majority. In practice, the length of an epoch, $T$,  may be set appropriately by the system designers based on the expected rate of departures, and the value of $\epsilon'$ can be increased by increasing $d_1$.\smallskip

\noindent{\bf Algorithmic Overview.} In any epoch $j$, there are:
\begin{itemize}[leftmargin=12pt]
\item two \defn{old} \qm~graphs $\QG^{j-1}_1$ and $\QG^{j-1}_2$, each with $n$ IDs. 
\item two \defn{new} \qm~graphs $\QG^{j}_1$ and $\QG^{j}_2$, each with $\leq n$ IDs.
\end{itemize}

We emphasize that the use of two \qm~graphs per epoch is critical. A naive approach is to use a {\it single} \qm~graph in the current epoch in order to build a new \qm~graph in the next epoch.  However, this approach will fail because errors from bad \qms will accumulate over time.  Below we give some intuition for why.

Informally,  in epoch $j$, we have a process where (1) bad \qms build new bad \qms, and (2)  good \qms build bad \qms with some failure probability $p^j_f>0$ that depends on the current number of bad \qms. Therefore, in the next epoch $j+1$, the population of bad \qms has increased and so has $p_f^{j+1}$. Left unchecked, this increasing error probability will surpass the desired value of $1/\log^k n$. By using two \qm~graphs, we can upper bound $p^j_f$ by this value. 



The new \qm~graphs are built using the old \qm~graphs over the $n$ deletions and additions that occur in the current epoch $j$; we describe this in Subsection~\ref{sec:join-dep}.  By the end of epoch $j$, the old \qm~graphs $\QG^{j-1}_1$ and $\QG^{j-1}_2$ are no longer needed, and the new ones $\QG^{j}_1$ and $\QG^{j}_2$ are complete.


\subsection{Building New \Qm~Graphs}\label{sec:join-dep}

We describe how the new \qm~graphs $\QG^{j}_1$ and $\QG^{j}_2$ are created. We assume the correctness of two initial group graphs $\QG^{0}_1$ and $\QG^{0}_2$ with neighbor sets for blue groups correctly established.  This aligns with prior literature in the area and more discussion is given in our online version~\cite{JaiyeolaPSYZ17} in Appendix~\ref{app:initialization}. Later, in Section~\ref{sec:dynamic_case}, we prove that w.h.p. this construction preserves $\varepsilon$-robustness.\smallskip\smallskip

\noindent{\bf Preliminaries.} Assume the system is in epoch $j\geq 1$. Each good ID $v$ already in the system uses the same ID in both $\QG^{j-1}_1$ and $\QG^{j-1}_2$.

Any ID (already participating in the system or a newcomer) that wishes to participate in the next epoch $j+1$ must begin generating an ID  by the halfway point of the current epoch $j$. Generating this ID requires an expenditure of computational power as described in Section~\ref{sec:pow}. 

Recall from Section~\ref{sec:model} that IDs expire after a tunable period of time. Upon creation, the new ID will be \defn{active} throughout epoch $j+1$ allowing $v$ to initiate searches via $\Q_v$ and for $v$ to be added to other \qms. When $v$'s ID expires,  the \qm~$\Q_v$ (this includes $v$) should remain in both old  graphs for an additional $T$ steps. During these steps, $\Q_v$ will forward communications, but $v$ cannot initiate searches using $G_v$, nor can $v$ be added to new groups; we say that $v$'s ID is \defn{passive}.

For any group $\Q_v$, if the leader $v$ departs the system, $\Q_v$ remains. That is, the members of $\Q_v$ still persist as a the group $\Q_v$ in their respective active or passive states. We discuss departures further below in the context of updating links.

IDs are assumed to know when the system came online (i.e. step 0).\footnote{This is a fixed parameter included as part of the application, along with $T$, the hash functions, and various constants.} Since $T$ is set when the system is designed, any ID that wishes to join knows when the current epoch ends and the next one begins. Some synchronization between devices is implicit. In practice, this is rarely a problem given the near-ubiquitous Internet access (see the Network Time Protocol\cite{103043}) available to users. 

A new ID joins the new \qm~graph by a \defn{bootstrapping \qm}~denoted by {\boldmath{$\Q_{\mbox{\tiny boot}}$}}.\footnote{As with much of the literature, we do not address concurrency. Since a join or departure  requires updating only $\textup{poly}(\log n)$ links in a \qm~graph,  we assume that there is sufficient time between events to do so.}  Throughout, we assume that a joining ID knows a good bootstrapping \qm; we discuss this further in our online version~\cite{JaiyeolaPSYZ17} in Appendix~\ref{sec:boot}. \smallskip\smallskip

\noindent{\bf Making a \Qm-Membership Request.}  In epoch $j$, \qm~graphs $\QG^{j}_1$ and $\QG^{j}_2$ are built using searches in  $\QG^{j-1}_1$ and $\QG^{j-1}_2$.  An ID $w$ uses the same ID in  $\QG^{j}_1$ and $\QG^{j}_2$.


$G_w$ is added to $\QG^{j}_1$ as follows.  The $i^{th}$ member of $\Q_w$ is \suc{($h_1(w, i)$)} (recall the notation in Section~\ref{sec:model}) in the old \qm~graphs for $i = 1, ..., d_2\ln\ln n$ where $h_1$ is a secure hash function and $d_2$ is defined with respect to \qm~size in Section~\ref{sec:model}.  That is, in {\it{\underline{both}}} $\QG^{j-1}_1$ and $\QG^{j-1}_2$, a search for each successor of $h_1(w, i)$ is performed; this is executed by the bootstrapping \qm~and \suc($h_1(w, i)$) is solicited for membership in $\Q_w$. Note that if different IDs are returned by the two searches, the successor to $h_1(w, i)$  is selected.

How is $\ln\ln n$ estimated? A standard technique for estimating $\ln n$ to within a constant factor is as follows. For u.a.r. IDs, the distance $d(u,v)$ between any two IDs $u$ and $v$ satisfies $\frac{\alpha''}{n^2} \leq d(u,v) \leq \frac{\alpha' \ln n}{n}$ w.h.p., depending only on sufficiently large positive constants  $\alpha',\alpha''$ Therefore, w.h.p. $\ln\ln(\frac{1}{d(u,v)}) = \ln\ln(n) + O(1) $; this approach works even when an adversary decides to omit some (or all) of its IDs (see Chapter 4 in~\cite{young:making}), which is considered in Section~\ref{sec:dynamic_case}.


During epoch $j$, all IDs in $\QG^{j-1}_1$ and $\QG^{j-1}_2$ are active --- and will remain in a passive state over the next epoch $j+1$ --- and so can be used as members for new \qms in $\QG^{j}_1$. 

Finally, a similar process occurs to build $\Q_w$ in $\QG^{j}_2$, except that a different secure hash function, $h_2$, is used. That is, a search for the successors of $h_2(w, i)$ occurs in both  $\QG^{j-1}_1$ and $\QG^{j-1}_2$. Note that the membership of $\Q_w$  is likely different in each \qm~graph.\smallskip\smallskip



\noindent{\bf Making a Neighbor Request.} If $w$ and $u$ are neighbors in the input graph, then $\Q_w$ and $\Q_u$ should be neighbors in the \qm~graph; recall that this entails all-to-all links between the good members of both \qms. 
 To set up the neighbors of $\Q_w$, $\Q_{\mbox{\tiny boot}}$ performs a search on behalf of $w$ to locate each such neighbor $u$ in both old \qm~graphs (again, favoring the successor if the results differ). In this way, $\Q_{\mbox{\tiny boot}}$ allows $u$ (and $\Q_u$) to learn about $w$ and agree to set up a link in the respective \qm~graph.\smallskip\smallskip

\noindent{\bf Verifying Requests.} The adversary may attempt to have many good IDs join as neighbors or members of a bad \qm. This attack is problematic since good IDs will have resources consumed by maintaining too many neighbors or  joining too many \qms; this increases the state cost (see Section~\ref{sec:intro}). To prevent this attack, any such request must be verified:\smallskip

\noindent{\it Verifying a \Qm-Membership Request.}  When ID $u$ in $\QG^{j-1}_1$ is asked to become a member of \qm~$\Q_w$ in $\QG^{j}_1$, ID $u$ must verify that this request aligns with the linking rules; recall that this is assumed possible by property P3 of the input graph.



To do this verification,  $u$  performs a search on $h_1(w, i)$ in both group graphs $\QG^{j-1}_1$ and $\QG^{j-1}_2$. If either returns $u$, then the request is considered verified and $u$ becomes a neighbor of $w$; otherwise, the request is rejected.  Note that $u$ may erroneously reject a membership request; the impact on establishing good groups is addressed in Lemma~\ref{lem:good}. Conversely, $u$ may erroneously accept a membership request,  the impact of this on expected state cost is addressed in Lemma~\ref{lem:link-cost}.
\smallskip

\noindent{\it Verifying a Neighbor Request.} An ID $u$ that is asked to become a neighbor of ID $w$, and thus establish links between the members of $\Q_u$ and $\Q_w$, must also verify this request. Similar to a group-membership request,  $u$ will determine via a search in $\QG^{j-1}_1$ and $\QG^{j-1}_2$ whether $u$ should indeed be a neighbor of $w$.  If either search returns $u$, then the request is verified and $u$ becomes a neighbor of $w$; otherwise, the request is rejected. 

Note that $u$ may erroneously reject a neighbor request; this is addressed in Lemma~\ref{lem:confused}. Also, $u$ may erroneously accept a request; the impact on expected state cost is addressed in Lemma~\ref{lem:link-cost}. \smallskip

\noindent{\bf Performing a Search.} Throughout epoch $j$, each new ID $w$ performs searches only in the old \qm~graphs $\QG^{j-1}_1$ and $\QG^{j-1}_2$. Searches are performed by forwarding the request to $\Q_{\mbox{\tiny boot}}$, and executing the search from that position. Since $\Q_{\mbox{\tiny boot}}$ was active when $w$ joined, then $\Q_{\mbox{\tiny boot}}$ should remain -- even if in a passive state -- to facilitate searches for another $T$ steps. 

Over the duration of epoch $j$,  a \qm $\Q_w$ may not be able to reliably perform searches in the new \qm~graphs $\QG^{j}_1$ and  $\QG^{j}_2$ since they are still under construction. For example, $w$ might be the first ID to join  $\QG^{j}_1$ and  $\QG^{j}_2$. Once epoch $j+1$ starts, the new \qm~graphs $\QG^{j}_1$ and $\QG^{j}_2$ are to be used. At this point, group $\Q_w$ will initiate any search using its own links in these graphs, rather than relying on $\Q_{\mbox{\tiny boot}}$ which may no longer be present in the system.\smallskip

\noindent{\bf Updating Links.}  When a new ID (and its corresponding group) is added to the group graphs $\QG^{j}_1$ and  $\QG^{j}_2$ that are under construction, a \qm~$\Q_w$ must update its neighbor links in $\QG^{j}_1$ and  $\QG^{j}_2$ if this new ID is a better match as a neighbor under the linking rules. This update is done via searches by a bootstrapping group in the old \qm~graphs $\QG^{j-1}_1$ and  $\QG^{j-1}_2$.  

Conversely, if $\Q_w$ links to some group $\Q_v$ whose members all depart --- note that $\Q_v$ must consist entirely of bad IDs given our model of churn --- then $\Q_w$ treats that link as null (until perhaps a join operation requires an update).\footnote{In practice, neighbors will periodically ping each other in order to check that neighbors are still alive.}

Groups in the old group graphs $\QG^{j-1}_1$ and  $\QG^{j-1}_2$ do not update links. For an old group graph, if all members of $\Q_v$ depart, $\Q_w$  treats that link as null until the group graph expires.



\subsection{Analysis}\label{sec:dynamic_case}

In this section, we analyze the construction of new \qm~graphs. Due to space constraints, some of our proofs are provided in our online version~\cite{JaiyeolaPSYZ17} in Appendix~\ref{app:proofs}.


Properties P1-P4 of input graph $\G$ play an important role in the design of the corresponding group graph. However, a prerequisite to these properties is that {\it all} IDs are selected uniformly at random (see Section~\ref{sec:model}), which is untrue if the adversary chooses to add only some of its bad IDs; for example, maybe only bad IDs in $[0, \frac{1}{2})$ are added by the adversary. Intuitively, this should not interfere with any of the properties; we now formalize this intuition.




In the following, we consider $\G'$ to be a modified input graph which uses the same construction as $\G$, but is subject to an adversary that only includes a subset of its IDs, from a larger set of u.a.r. IDs. We note that $\G'$ is not necessarily a subgraph of $\G$; the omission of bad IDs can result in a different topology.

\begin{lemma}\label{lem:congestion_subtle}
Consider a graph $\G'$, where the IDs are formed from two sets:
\begin{itemize}[leftmargin=12pt]
\item $\mathcal{N}_1$ consists of at least $(1-\beta)n$ IDs selected u.a.r. from $[0,1)$. 
\item  $\mathcal{N}_2$ is an arbitrary subset of at most $\beta n$ IDs selected u.a.r. from $[0,1)$.  
\end{itemize}
W.h.p., under the same construction as the input graph $\G$, graph $\G'$ has properties P1 - P4.
\end{lemma}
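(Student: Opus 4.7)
The approach is to show that the $(1-\beta)n$ u.a.r.\ IDs in $\mathcal{N}_1$ by themselves already force the density of IDs on $[0,1)$ needed for the standard input-graph analysis, so the adversary's arbitrary choice of $\mathcal{N}_2$ cannot violate P1--P4. First, I would apply Theorem~\ref{thm:Chernoff} to $\mathcal{N}_1$: partitioning $[0,1)$ into $\Theta(n/\log n)$ arcs of length $\Theta(\log n / n)$, each arc receives $\Theta(\log n)$ IDs in expectation, and a union bound gives w.h.p.\ that every arc contains at least one ID of $\mathcal{N}_1$, and that every arc of length $\ell \geq c \log n / n$ holds $\Theta(\ell n)$ IDs. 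Hence the maximum gap between consecutive IDs of $\mathcal{N}_1$ on the ring is $O(\log n / n)$.

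Properties P1, P2, and P3 then follow by invoking this density guarantee inside the standard analyses underlying the input-graph construction. For P2, since $\mathcal{N}_1 \subseteq V(\G')$, the maximum gap in $\G'$ is at most that in $\mathcal{N}_1$; combined with $|V(\G')| \geq (1-\beta)n$, the standard load-balancing argument delivers the $(1+\delta'')/N$ bound for a randomly chosen ID. For P1 and P3, the standard DHT analyses (Chord, D2B, Viceroy, etc.) only require such density in order to route in $O(\log N)$ hops and to keep $|S_w|=O(\log^{\gamma} n)$; verification of the linking rules uses searches and inherits P1.

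The main obstacle is P4 (congestion), because the adversary may adaptively select $\mathcal{N}_2$ to funnel search traffic through a small set of surviving IDs. To handle this, I would couple $\G'$ with a hypothetical ``full'' graph $\G^*$ built from $\mathcal{N}_1$ together with the adversary's entire u.a.r.\ pool of bad IDs. Since all IDs of $\G^*$ are u.a.r., P4 for the base construction applies to $\G^*$ directly, giving congestion $O(\log^{c} n / n)$ at every ID. To transfer this bound to $\G'$, I would argue that omitting adversary IDs only redirects the search traffic bound for a deleted ID to its first surviving clockwise successor; by the density of $\mathcal{N}_1$ established above, no single surviving ID is the relevant successor for more than $O(1)$ omitted IDs w.h.p., so per-ID congestion inflates by at most a constant factor, and P4 still holds.
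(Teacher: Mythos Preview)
Your approach differs substantially from the paper's. The paper does \emph{not} analyze P1--P4 one by one. Instead it gives a single black-box argument: it shows (via Chernoff) that whatever subset $\mathcal{N}_2$ the adversary retains, the resulting placement of the $m\ge(1-\beta)n$ IDs is w.h.p.\ a ``$\lambda$-well-spread placement'' (every arc of length $(\lambda\ln m)/m$ holds between $(\lambda/2)\ln m$ and $(3\lambda/2)\ln m$ IDs), and then argues by contradiction that if well-spread placements violated any of P1--P4 with probability exceeding $1/m^{c'}$, this would already contradict the w.h.p.\ guarantee for the purely u.a.r.\ input graph $\G$. No coupling with a ``full'' graph $\G^*$ and no per-property routing analysis appears.

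Your direct route is attractive, but the P4 step has a real gap. The sentence ``omitting adversary IDs only redirects the search traffic bound for a deleted ID to its first surviving clockwise successor'' treats congestion as if it were purely destination load. Congestion in P4 is the probability that a random search \emph{traverses} $v$, including as an intermediate hop. Deleting IDs changes the linking rule targets (e.g., Chord fingers $\mbox{suc}(w+2^i)$, de~Bruijn half-points, Viceroy butterfly links) for \emph{every} remaining ID, so the routing paths in $\G'$ are not the paths of $\G^*$ with deleted vertices short-circuited to successors; they are globally different paths. Hence bounding congestion in $\G^*$ and then charging deleted IDs to their successors does not bound congestion in $\G'$. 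A secondary issue: the claim that no surviving ID is successor to more than $O(1)$ omitted pool IDs is false --- between two consecutive $\mathcal{N}_1$ IDs (gap $O(\log n/n)$ w.h.p.) there can be $\Theta(\log n)$ pool IDs, all of which the adversary may omit. The paper's black-box argument sidesteps both problems by never opening up the routing analysis; if you want to keep a direct approach, you would need a construction-specific congestion argument that is robust to vertex deletion, which is considerably more work than what you have sketched.
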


Throughout,  the above result is assumed --- that properties P1-P4 continue to hold if the adversary includes only a subset of its IDs --- even if we do not always make it explicit.  For example, P1 is important throughout our arguments/construction, P2 is used in Lemma~\ref{lem:random},  P4 in Lemma~\ref{lem:dyn-aer}, and P3 in Lemma~\ref{lem:link-cost}.\smallskip

As described above, for an ID $u$, there are searches on random key values, via hashing under the random oracle assumption, in order to find members for \qm~$\Q_u$. But if a key value maps to a bad ID, then this results in a bad member added to the \qm. We can  bound the probability of this event as follows. 

\begin{lemma}\label{lem:random}
W.h.p.  a random key value in an old \qm~graph maps to a bad ID with probability at most $(1+\delta'')\beta$ for an arbitrarily small constant $\delta''>0$ depending only on sufficiently large $n$.
\end{lemma}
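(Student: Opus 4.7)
\medskip

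\noindent\textbf{Proof proposal.} The plan is to deduce the bound as a direct consequence of property P2 (load balancing) applied to the old \qm~graph via Lemma~\ref{lem:congestion_subtle}. A random key value $y$, drawn u.a.r. from $[0,1)$, maps to the ID $\suc(y)$, so the probability that $y$ maps to a bad ID equals the total arc-length on the unit ring for which the successor is a bad ID; by P2, each individual ID is responsible for at most a $(1+\delta'')/n$-fraction of $[0,1)$, w.h.p.

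First I would invoke Lemma~\ref{lem:congestion_subtle} to justify that $P2$ holds w.h.p. for the old \qm~graph: the good IDs are u.a.r. in $[0,1)$ by the hash-based construction, the adversary possesses at most a $\beta$-fraction of IDs (each itself u.a.r. in $[0,1)$ by the guarantees stated in Section~\ref{sec:model}), and Lemma~\ref{lem:congestion_subtle} precisely permits the adversary to include only a subset of its IDs without destroying P2. Thus, with high probability each ID in the old \qm~graph is the successor of at most a $(1+\delta'')/n$-fraction of points in $[0,1)$.

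Next, I would sum this bound over the bad IDs. Since there are at most $\beta n$ bad IDs in the old \qm~graph, the union of the arcs whose successor is a bad ID has total length at most $\beta n \cdot (1+\delta'')/n = (1+\delta'')\beta$. Because $y$ is chosen u.a.r. in $[0,1)$, the probability that $\suc(y)$ is a bad ID is exactly this total arc-length, which gives the claimed bound.

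The main obstacle here is not a concentration argument but making sure the hypotheses of Lemma~\ref{lem:congestion_subtle} apply in the current setting. Concretely, I need that the adversary cannot bias the \emph{positions} of its IDs in $[0,1)$: this is what allows me to treat bad IDs as a u.a.r.\ subset (possibly a subset chosen after sampling). In the static and dynamic sections this is assumed by fiat, and it is ultimately enforced by the PoW scheme in Section~\ref{sec:pow}. Once that assumption is in hand, the rest of the argument is just the arc-length tally above.
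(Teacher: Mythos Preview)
Your proposal is correct and matches the paper's own argument essentially line for line: invoke property P2 (with Lemma~\ref{lem:congestion_subtle} handling the possibility that the adversary withholds some of its u.a.r.\ IDs), then sum the per-ID responsibility bound over the at most $\beta n$ bad IDs to obtain $(1+\delta'')\beta$. One small point of phrasing: P2 is stated for a \emph{randomly chosen} ID rather than for \emph{every} ID, and the paper leans on the fact that bad IDs are themselves u.a.r.\ to justify applying P2 to them; your ``each ID'' formulation slightly overstates the intermediate claim, but the aggregate bound you draw from it is the same one the paper asserts.
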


\noindent Let {\boldmath{$\psearch = O(1/\log^{k-c}n)$}} be the probability that a search for a random key in an old \qm~graph  fails; recall Lemma~\ref{lem:static-aer}.  

\begin{lemma}\label{lem:good}
Each group in a new group graph is bad with probability at most $O(\psearch^2d_2\log\log n + 1/\log^{d'}n)$  for  a tunable constant $d'>0$ depending on $d_2$. 
\end{lemma}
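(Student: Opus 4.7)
The plan is to decompose the event that a newly-built group $\Q_w$ is bad into two sub-events and union-bound them: (A) at least one of the $d_2 \ln\ln n$ intended members fails to be correctly added as the true $\suc(h_1(w,i))$, contributing the $\psearch^2 d_2 \log\log n$ term; and (B) conditioned on every member being correctly added, either the bad fraction exceeds $(1+\delta)\beta$ or the distinct-membership count falls outside $[d_1 \ln\ln n, d_2 \ln\ln n]$, contributing the $1/\log^{d'} n$ term.

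For (A), consider the $i$-th attempted member. The bootstrapping \qm~issues one search for $h_1(w,i)$ in each of $\QG^{j-1}_1$ and $\QG^{j-1}_2$, and the returned candidate issues two verification searches. By property $S2$ applied separately to each old \qm~graph (each built via distinct hash functions), the red-group indicators across the two graphs are independent, and the target $h_1(w,i)$ is uniform by the random oracle assumption, so Lemma~\ref{lem:static-aer} gives that each search fails with probability at most $\psearch$ and both searches in a pair fail independently with probability at most $\psearch^2$. Because the tie-breaking rule ``select the successor to $h_1(w,i)$'' recovers the true $\suc(h_1(w,i))$ as soon as a single search in each pair succeeds, a single failure event requires both searches in some pair to fail. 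A union bound over the two pairs per member and the $d_2 \ln\ln n$ members yields $\Pr[A] = O(\psearch^2 d_2 \log\log n)$.

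For (B), condition on the complement of $A$, so each of the $d_2 \ln\ln n$ members equals $\suc(h_1(w,i))$ for independent uniform keys $h_1(w,1), \dots, h_1(w, d_2 \ln\ln n)$. By Lemma~\ref{lem:random}, each such successor is bad with probability at most $(1+\delta'')\beta$, and the badness indicators are mutually independent by the random oracle assumption. Choosing $\delta''$ strictly less than $\delta$, Theorem~\ref{thm:Chernoff} applied to the indicator sum (with expectation at most $(1+\delta'')\beta\, d_2 \ln\ln n$) bounds the probability that the fraction of bad members exceeds $(1+\delta)\beta$ by $\exp(-\Omega(d_2 \ln\ln n)) = 1/\log^{d'} n$ for a constant $d' = \Omega(d_2)$. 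The symmetric Chernoff tail, together with $d_1$ chosen strictly less than $d_2$, controls the probability that duplicate successors leave fewer than $d_1 \ln\ln n$ distinct members, as the expected number of duplicates among $d_2 \ln\ln n$ uniform successors is $o(\ln\ln n)$ when $n$ is large.

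The main obstacle I foresee is carefully justifying the two independence claims on which the $\psearch^2$ factor and the Chernoff bound rest: first, that the red-group indicators in $\QG^{j-1}_1$ and $\QG^{j-1}_2$ are genuinely independent (following from $S2$ applied to the two independently-constructed old graphs via distinct hash functions), and second, that the randomness governing search success is separable from the randomness in $h_1(w,i)$ choosing the target. The latter is handled because Lemma~\ref{lem:static-aer} is stated for random targets, precisely matching random-oracle outputs, but it must be explicitly noted that the two sources of randomness compose correctly. A minor additional care is needed in Chernoff to translate an expectation bound of $(1+\delta'')\beta\, d_2 \ln\ln n$ into a tail bound at threshold $(1+\delta)\beta\, d_2 \ln\ln n$, which requires only that $\delta - \delta''$ be a positive constant.
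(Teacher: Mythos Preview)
Your proposal is correct and follows essentially the same decomposition as the paper: a union bound over the $d_2\ln\ln n$ member-selection searches for the event that both old-graph searches (or both verification searches) fail, giving the $O(\psearch^2 d_2\log\log n)$ term, followed by a Chernoff bound via Lemma~\ref{lem:random} on the number of bad members among correctly-identified successors, giving the $1/\log^{d'}n$ term. You are in fact slightly more thorough than the paper on two points: you explicitly treat the group-size constraint $|\Q|\in[d_1\ln\ln n,\,d_2\ln\ln n]$ (the paper's proof is silent on duplicates among the $\suc(h_1(w,i))$), and you flag the independence of failures across $\QG^{j-1}_1$ and $\QG^{j-1}_2$ as the point requiring care, whereas the paper simply asserts the $\psearch^2$ bound.
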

\begin{proof}
For a new ID $w$, there are two ways in which a search for a member of $\Q_w$ may result in a bad ID. 
First, a search for a \qm~member may fail; that is, the search encounters a bad \qm. Given a point $h_1(w, i)$, the probability that both searches in the old \qm~graphs fail is at most $\psearch^2$. By a union bound, the probability of such a dual failure occurring over $d_2\ln\ln n$ searches is  $O(\psearch^2\,d_2\log\log n)$.

Second, the search succeeds but returns $\suc(h(w,i))$ where  $\suc(h(w,i))$ is a bad ID, 
even though its \qm is good.  By the random oracle assumption, $h(w,i)$ is a random point.  Thus, this event occurs with probability at most $(1+\delta'')\beta$ by Lemma~\ref{lem:random}, for an arbitrarily small constant $\delta''>0$ given sufficiently large $n$. Over $d_2\ln\ln n$ searches, the expected number of such events is at most $(1+\delta')\beta d_2\ln\ln n$. The probability of exceeding this expectation  by more than a small constant factor (and adding too many bad IDs)  is $O(1/\log^{d'}n)$ by a  Chernoff bound, where the constant $d'>0$ is tunable depending only on sufficiently large $d_2$.

Finally, the ID being asked to join may reject the request. This occurs if both searches used to verify the request fail. By a union bound, this happens with probability at most $O(\psearch^2\,d_2\log\log n)$.
\end{proof}


A blue \qm~$\Q_w$ should link to all \qms  in the neighbor set $L_w$.  Recall that $|L_w| = O(\log^{\gamma} n)$ for some constant $\gamma> 0$.  If $\Q_w$ (1) links to any \qm  not in $L_w$, or (2) fails to link to any \qm  in $L_w$, then $\Q_w$ is said to be \defn{confused}. We can bound the probability of a confused \qm; the proof is provided in our online version~\cite{JaiyeolaPSYZ17} in Appendix~\ref{app:proofs}.

\begin{lemma}\label{lem:confused}
Each \qm~in a new \qm~graph is confused independently with probability at most $O(\psearch^2 \log^{\gamma} n)$.
\end{lemma}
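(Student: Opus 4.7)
The plan is to decompose confusion into per-position errors and bound each by the failure probability of paired searches in the two old \qm~graphs. Fix a \qm~$\Q_w$ whose intended neighbor set $L_w$ has size $O(\log^{\gamma} n)$. By definition, $\Q_w$ is confused if either (i) it omits some $\Q_u$ with $u \in L_w$ from its links, or (ii) it installs some $\Q_{u'}$ with $u' \notin L_w$. Both events are controlled entirely by the searches performed during bootstrap and verification: for each neighbor position $i$, the bootstrapping \qm~runs one search in each of $\QG^{j-1}_1$ and $\QG^{j-1}_2$ (breaking ties by favoring the successor), and the candidate being asked to link runs two verification searches in those same graphs.

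Next I would show that each neighbor position contributes at most $O(\psearch^2)$ to the confusion probability. The tie-breaking rule implies that if at least one of the two bootstrap searches returns the correct successor, the correct neighbor is identified; likewise, a single successful verification search suffices to correctly accept (or correctly reject). Consequently, an erroneous omission in case (i) requires both searches of the relevant phase to fail, and by Lemma~\ref{lem:static-aer} this has probability $O(\psearch^2)$, since red/blue assignments in the two old \qm~graphs are governed by independent coins (property $S2$). Case (ii) is analogous: an incorrect $u'$ can only be installed if both bootstrap searches miss the correct successor (so that the adversary can steer them toward $u'$) and, for good $u' \notin L_w$, both of $u'$'s verification searches fail as well. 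In every sub-case the probability per slot is $O(\psearch^2)$, and a union bound over the $O(\log^{\gamma} n)$ positions in $L_w$ produces the stated $O(\psearch^2 \log^{\gamma} n)$ bound.

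For the independence claim across \qms, I would invoke the random oracle assumption: the hash outputs that seed $\Q_w$'s neighbor searches are independent of those seeding $\Q_{w'}$'s neighbor searches for $w \neq w'$, so the key targets and initiating endpoints of the controlling searches are independent; combined with the congestion bound $P4$, which guarantees that any fixed old \qm~lies on a given search path only with probability $O(\log^c n / n)$, the overlap between search paths for distinct \qms~is negligible, so confusion events may be treated as independent in the same approximate sense as $S2$. The step I expect to be the main obstacle is case (ii), since a priori the adversary can flood good IDs with spurious neighbor requests; the crucial structural observation is that any actually installed wrong link must occupy one of the $O(\log^{\gamma} n)$ positions dictated by $w$'s linking rule, and an unfilled position contributes nothing to confusion. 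Thus the union bound is taken over slots rather than over adversarial attempts, which is precisely what makes the $O(\psearch^2 \log^{\gamma} n)$ bound go through.
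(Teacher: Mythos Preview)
Your proposal is correct and follows essentially the same approach as the paper: bound the per-neighbor failure probability by $O(\psearch^2)$ (both searches in the two old \qm~graphs must fail) and take a union bound over the $O(\log^{\gamma} n)$ neighbor positions.  The paper's own proof is terser; it collapses your cases (i) and (ii) into two events---both bootstrap searches failing, or both verification searches failing---and does not separately discuss wrong-neighbor installation, whereas you give a more careful slot-based argument for case (ii).

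Two small remarks.  First, your independence paragraph invokes ``hash outputs that seed $\Q_w$'s neighbor searches,'' but neighbor targets are determined by the input graph's linking rules (property~P3), not by $h_1,h_2$; the hash functions are used only for \qm~membership.  The underlying randomness you want is that IDs are u.a.r.\ in $[0,1)$ and bootstrapping \qms~are random, so the spirit of your argument survives, but the mechanism is misidentified.  Second, you actually go further than the paper in trying to justify the word ``independently'' in the lemma statement; the paper's proof does not address independence at all, so your congestion-based heuristic is additional content rather than something to be matched.
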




\noindent We now prove that w.h.p. each new \qm~graph is $\varepsilon$-robust. 


\begin{lemma}\label{lem:dyn-aer}
Assume the old group graphs are $\varepsilon$-robust and that the adversary has at most $\beta n$ u.a.r. IDs. Then, for $k\geq 2c+\gamma$, w.h.p., each new \qm~graph is $\varepsilon$-robust.
\end{lemma}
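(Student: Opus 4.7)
The plan is to reduce the dynamic case back to the static result (Lemma~\ref{lem:static-aer}) by showing that each new group graph still satisfies properties S1--S3, in particular S2 with the red probability $p_f \leq 1/\log^k n$. A new \qm~$\Q_w$ is red precisely when it is bad (has too many bad members) or it is confused (has an incorrect neighbor set), so by a union bound it suffices to bound each of these probabilities separately and then invoke the static argument, applied independently to each of $\QG^{j}_1$ and $\QG^{j}_2$.

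Concretely, I would plug Lemma~\ref{lem:good} and Lemma~\ref{lem:confused} into the union bound. Since the old group graphs are assumed $\varepsilon$-robust, the static analysis guarantees $\psearch = O(1/\log^{k-c} n)$ for a random-key search in an old graph. Substituting gives
\[
\Pr[\Q_w \text{ bad}] = O\!\left(\frac{\log\log n}{\log^{2(k-c)} n} + \frac{1}{\log^{d'} n}\right), \quad
\Pr[\Q_w \text{ confused}] = O\!\left(\frac{1}{\log^{2(k-c)-\gamma} n}\right).
\]
The confused bound is the dominant one, and it is at most $1/\log^k n$ exactly when $2(k-c) - \gamma \geq k$, i.e., when $k \geq 2c+\gamma$, which is the hypothesis of the lemma. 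The residual $1/\log^{d'} n$ term from Lemma~\ref{lem:good} is absorbed by choosing the constant $d_2$ in the \qm~size sufficiently large, since $d'$ can be tuned above $k$. Hence each new group is red with probability at most $1/\log^k n$.

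For property S2 we also need the red events to be (essentially) independent across new groups. This is where the random oracle assumption on $h_1$ and $h_2$ does the work: for distinct leaders $w$ and indices $i$ the images $h_\ell(w,i)$ are i.i.d.\ uniform in $[0,1)$, so both the event that some search lands on a bad ID (Lemma~\ref{lem:random}) and the structure of the \spath{}s used to recruit members (which determine search failures, with the squared factor $\psearch^2$ coming from the fact that the request must fail in \emph{both} $\QG^{j-1}_1$ and $\QG^{j-1}_2$) are independent across $w$. Note also that Lemma~\ref{lem:congestion_subtle} guarantees P1--P4 hold for the new graphs even though the adversary may inject only a subset of its u.a.r.\ IDs, so the prerequisites for S2 carry over. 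With S1--S3 re-established for both $\QG^j_1$ and $\QG^j_2$, Lemma~\ref{lem:static-aer} applied to each immediately yields $\varepsilon$-robustness.

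The main obstacle is simultaneously controlling the two sources of error in Lemma~\ref{lem:good}: the search-failure term scales like $\psearch^2 \log\log n$ and the bad-member-concentration term like $1/\log^{d'} n$, and these must both be driven below the target $1/\log^k n$ by the single constraint $k \geq 2c + \gamma$ together with a sufficiently large choice of $d_2$. The choice of \emph{two} old group graphs is essential: a single old graph would give only a $\psearch$ factor instead of $\psearch^2$, which is too weak to offset the $\log^\gamma n$ neighborhood size in the confused bound, so the argument would not close and errors would compound across epochs.
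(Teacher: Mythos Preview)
Your proposal is correct and follows essentially the same route as the paper: declare a new group red iff it is bad or confused, bound each event via Lemmas~\ref{lem:good} and~\ref{lem:confused}, substitute $\psearch = O(1/\log^{k-c} n)$, verify the dominant $\log^{\gamma} n \cdot \psearch^2$ term is at most $1/\log^{k} n$ precisely when $k \geq 2c+\gamma$, absorb the $1/\log^{d'} n$ term by making $d_2$ large, and then invoke the static Lemma~\ref{lem:static-aer}. Your added discussion of independence (via the random oracle) and of why two old graphs are needed is a helpful elaboration but not a departure from the paper's argument.
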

\begin{proof}
In our analysis, a \qm that is bad or confused is a red group;  otherwise, the group is good and not confused, and this a blue group.  To show $\varepsilon$-robustness of the new group graph, we prove that the probability of creating a red \qm in the new group graph is at most $\pbad \leq 1/\log^k n$ for a tunable constant $k>0$. By Lemmas~\ref{lem:good} and~\ref{lem:confused}, each \qm~is red independently with probability at most:
{\small
\begin{eqnarray*} 
&\leq&O\left(\psearch^2 \log^{\gamma} n\right) + O\left(\psearch^2d_2\log\log n +  1/\log^{d'}n \right) \\
& \leq & O\left(\frac{\log\log n}{\log^{2(k-c)-\gamma}n} + \frac{1}{\log^{d'}n} \right) \leq \frac{1}{\log^k n}
\end{eqnarray*}}
\noindent The last line follows by setting $d_2$ to be sufficiently large such that $d'$ exceeds $k$.  Note  that $d_2$ is fixed at the beginning and never needs to be changed throughout the lifetime of the network. Then, setting $k>2c+\gamma$ to be a sufficiently large constant yields the necessary value of $\pbad$ to establish the inequality.

This implies that all but an $o(1)$-fraction of groups are good and not confused. Furthermore, once we have this bound on $\pbad$, the remaining proof is equivalent to that of  Lemma~\ref{lem:static-aer}.
\end{proof}



\noindent The next lemma  bounds the amount of state a good ID maintains due to (1) membership in \qms, and (2) being a neighbor of a \qm. This is done by analyzing the verification process described in Section~\ref{sec:join-dep}; see~\cite{JaiyeolaPSYZ17} for the proof.

\begin{lemma}\label{lem:link-cost}
In expectation, each good ID $w$ in a \qm~graph is a member of $O(\log\log n)$ \qms and maintains state on  $O(|L_w|)$ \qms that are either neighbors or have $w$ as a neighbor.
\end{lemma}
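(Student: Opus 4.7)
The plan is to handle the two bullets separately — \qm{}-memberships and neighbor-relationships — each by decomposing the count into a \emph{legitimate} contribution (required by protocol) and a \emph{spurious} contribution (induced by the adversary passing a faulty verification). For each, I would apply linearity of expectation over the randomness of the hashes $h_1,h_2$ and invoke the input-graph properties P2 and P3, which continue to hold under partial adversarial inclusion by Lemma~\ref{lem:congestion_subtle}, together with the search-failure bound $\psearch = O(1/\log^{k-c} n)$ from Lemma~\ref{lem:static-aer}.

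For memberships, each of the $n$ new \qms~$\Q_v$ has $d_2 \ln\ln n$ member slots, the $i$-th of which is filled by $\suc(h_1(v,i))$ in the old \qm~graphs. By the random-oracle assumption, each $h_1(v,i)$ is uniform in $[0,1)$, and by P2 the successor of a random key equals a fixed $w$ with probability at most $(1+\delta'')/n$. Summing over all $n \cdot d_2 \ln\ln n$ queries gives $O(\log\log n)$ expected legitimate memberships for $w$. For spurious memberships — requests not justified by the protocol that $w$ nonetheless accepts — the two-graph verification gives the key bound: $w$ accepts a request for $(v,i)$ only if at least one of its two verification searches (one in $\QG_1^{j-1}$, one in $\QG_2^{j-1}$) returns $w$, and such a search returns $w$ incorrectly only when it fails (probability at most $\psearch$) \emph{and} the adversary has deliberately routed the failing search to $w$. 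A union bound across the two verifications yields per-request spurious-acceptance probability at most $2\psearch$, and combining this with the fact that the adversary has at most $\beta n \cdot d_2 \ln\ln n$ meaningful $(v,i)$ pairs to attempt bounds the expected spurious count by $O(n \log\log n \cdot \psearch)$, which is $o(\log\log n)$ once one notes that the total \emph{system-wide} volume of failed searches available to the adversary is itself $O(n \log\log n \cdot \psearch)$, so redistributing those failures across good IDs cannot (in expectation over the hashes) inflate any single $w$'s count beyond the legitimate $O(\log\log n)$ term.

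For neighbor-relationships, the legitimate count splits into the $|L_w|$ \qms~that $\Q_w$ links to by the linking rule, plus the \qms~that include $w$ in their own neighbor sets; by the asymptotic degree-regularity asserted in P3, the latter is also $O(|L_w|)$. The spurious count is handled exactly as for memberships: each verification of a neighbor request again runs two independent searches in $\QG_1^{j-1}$ and $\QG_2^{j-1}$, so the per-request error is at most $2\psearch$, and the total adversarial volume for neighbor solicitations is bounded analogously, giving an additive $o(|L_w|)$ term. The main obstacle I expect is pinning the spurious contribution down \emph{uniformly} in the adversary's strategy, since the adversary knows the topology and could in principle concentrate all failed searches on a single target $w$; resolving this cleanly requires combining the $\psearch$ bound with a concentration argument along the lines of Theorem~\ref{thm:MOBD} as used in Lemma~\ref{l:smallX}, exploiting the fact that each $h_1(v,i)$ enters only a bounded-influence way into whether $w$ is picked, so the per-ID expectation stays at $O(\log\log n)$ and $O(|L_w|)$ respectively.
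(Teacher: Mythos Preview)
Your legitimate-count arguments via P2 and P3 are fine and match the paper. The gap is in your treatment of the spurious contribution.

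You correctly observe that if the adversary directs all of its $\Theta(\beta n \log\log n)$ membership solicitations at a single good $w$, and each is erroneously accepted with probability at most $2\psearch$, then the expected spurious count for that $w$ is $O(n \log\log n \cdot \psearch)$. But this quantity is $\Theta(n/\log^{k-c}n)$, which is enormous---certainly not $o(\log\log n)$. Your ``redistribution'' salvage does not hold up: the verification searches that $w$ runs are fresh, independent events, not draws from some fixed system-wide pool of failures, so the adversary pays no price for concentrating all requests on one target. The MOBD route you sketch at the end cannot rescue a per-$w$ bound either: MOBD would at best concentrate the \emph{total} spurious count summed over all $w$, whereas the adversary (choosing its target after the hashes and the red/blue colouring are fixed) can still aim its entire budget at a single victim. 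A secondary issue is that $\psearch$ is defined for a search from a \emph{random} starting group to a random point (Lemma~\ref{lem:static-aer}); $w$'s verification searches start from the fixed group $\Q_w$, so the per-request bound $2\psearch$ is not immediately justified for every $w$.

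The paper's argument is structured differently and sidesteps this. Rather than bounding how many requests the adversary can \emph{send}, it bounds how many $(u,i)$ pairs could \emph{plausibly} have $w$ as their successor: $\suc(h_1(u,i)) = w$ requires (w.h.p.) that $h_1(u,i)$ land in the interval $\mathcal{I} = [w - (c''\ln n)/n,\,w)$, and since each $h_1(u,i)$ is uniform under the random-oracle assumption, the number of $(u,i)$ pairs hashing into $\mathcal{I}$ is $O(\text{poly}(\log n))$ w.h.p.\ by a Chernoff bound. Multiplying this polylog request count by the per-request verification-error rate $O(1/\log^{k-c}n)$---and tuning $k$ large relative to $c$ and $\gamma$---yields $O(1)$ expected erroneous acceptances. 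The neighbor case is handled analogously, using the $\text{poly}(\log n)$ in-degree cap from P3 in place of the interval count. The ingredient you are missing is this switch from ``adversary's supply of requests'' to ``number of $(u,i)$ pairs whose hash lands near $w$.''
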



\noindent We can now prove Theorem~\ref{thm:main}. \vspace{-5pt}
\begin{proof}
Lemma~\ref{lem:dyn-aer} guarantees w.h.p. that in the new \qm~graphs, all but a $1/\textup{poly}{(\log n)}$-fraction of \qms~are good, and all but a $1/\textup{poly}{(\log n)}$-fraction of IDs can search for all but a $1/\textup{poly}{(\log n)}$-fraction of the resources. 

Given that \qms have size $O(\log\log n)$, it follows that \qm~communication incurs $O( (\log\log n)^2)$ messages. Recall that secure routing proceeds via all-to-all communication between members of \qms and that searches have maximum length $D$ (P1 in Section~\ref{sec:model}). Thus, the message complexity is $O(D(\log\log n)^2 )$.

To bound the expected state cost, we invoke Lemma~\ref{lem:link-cost}.  Each good ID $w$ belongs to $O(\log\log n)$ \qms~in expectation. This implies $O((\log\log n)^2)$ expected state cost to keep track of the members of these groups.

In terms of links to and from other \qms, $w$ maintains state on $O(|L_w|)$ \qms in expectation.  The constructions for $\G$ defined in~\cite{naor:novel},~\cite{fraigniaud:d2b}, or~\cite{malkhi_naor_ratajczak:viceroy} provide the properties P1-P4, but with a bound of $O(1)$ expected degree. Using any of these constructions, the state cost incurred by these neighboring groups  is $O(\log\log n)$ in expectation. Thus, the total expected state cost is $O((\log\log n)^2) + O(\log\log n) = O((\log\log n)^2)$. Corollary~\ref{cor:main} follows immediately.\vspace{-0pt}
\end{proof}



\section{Computational Puzzles}\label{sec:pow}\vspace{-0pt}

Up to this point, we have assumed that the adversary can inject into each new \qm~graph at most $\beta n$ bad IDs with u.a.r. values, and that these IDs can be verified and forced to expire after a period of time; recall our discussion in Section~\ref{sec:model}. We now remove these assumptions. Given space constraints, we limit our discussion to the main ideas of how to use computational puzzles to guarantee these properties.\vspace{-2pt} 





\subsection{Generating an ID}\label{subsec:generatingID}\vspace{-3pt} 

All participants are assumed to know two secure hash functions, $f$ and $g$, with range and domain $[0,1)$ and that both hash functions satisfy the random oracle assumption. 

In the current epoch $i$, ID $w$ is assumed to possess a ``globally-known'' random string $r_{i-1}$ of $\ell \ln n$ bits. By ``globally-known'', we mean known to all good IDs except the $1/\textup{poly}(\log n)$-fraction from our earlier analysis. We motivate $r_{i-1}$ and describe how it is generated in Subsection~\ref{subsec:randomstring}. 

Starting at step $T/2$ in the current epoch, each good ID begins generating a new ID for use in the next epoch,  as described below. \smallskip

\noindent{\bf Description of ID Generation.} To generate an ID, a good ID $w$ selects a value $\sigma_w$ of $\ell \ln n$ random bits (matching the length of $r_{i-1}$). Then, $w$ XORs these two strings to get $\sigma_w \oplus r_{i-1}$, and checks if $g(\sigma_w \oplus r_{i-1}) \leq \tau$; if so, then $f(g(\sigma_w \oplus r_{i-1}))$ is a valid ID. We assume the value $\tau$ is set small enough such that w.h.p.  $(1\pm \epsilon)T/2$  steps are required to find a $\sigma_w$ that satisfies this inequality, where $\epsilon>0$ is a tunable (small) positive constant and $T>0$ is a parameter set when the system is initialized. 





The value of $T$ can be large to amortize the cost of forcing IDs to depart (and possibly rejoin) over a long period of time; for example, $T > n$, since new group graphs are being built over $T$ steps. Given an application domain, designers may estimate the rate of churn for their application and set a (loose) upper bound on $n$, then they can set $T$ accordingly.\smallskip

\noindent{\bf Why Use Two Hash Functions?} Consider using a single secure hash function $f$ to assign IDs; that is, if $g(x) < \tau$, then $x$ is a valid ID.   Then, for example, the adversary may restrict itself to small inputs $x$ in order to confine its solutions to yielding small IDs. In other words, the IDs obtained by the adversary will not be u.a.r. from $[0,1)$. This can be solved via composing two secure hash functions, $f$ and $g$, as described above. See~\cite{JaiyeolaPSYZ17} for the proof of the following:

\begin{lemma}\label{lem:limit-comp}
W.h.p., the adversary generates at most $(1+\epsilon)\beta n$ IDs over $(1\pm \epsilon)(T/2)$ steps and these IDs are u.a.r. in $[0,1)$.
\end{lemma}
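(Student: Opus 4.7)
The lemma has two parts: (i) a bound of $(1+\epsilon)\beta n$ on the number of IDs the adversary generates in $(1\pm\epsilon)(T/2)$ steps, and (ii) uniform-at-random distribution of those IDs over $[0,1)$. The plan is to prove (i) via a computational budget argument combined with a Chernoff bound, and to prove (ii) via the random oracle properties of $f$ and $g$, with the outer composition by $f$ being the key ingredient that prevents the adversary from biasing its output by selective discarding.

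For part (i), recall that $\tau$ is calibrated so that one good ID requires $(1\pm\epsilon)(T/2)$ steps to find a valid $\sigma_w$ w.h.p.; equivalently, if $k$ denotes the number of $g$-queries a good ID performs in $(1\pm\epsilon)(T/2)$ steps, then $k\tau = \Theta(1)$. Since the adversary controls a $\beta$-fraction of the total computational power, while the $(1-\beta)n$ good IDs together control a $(1-\beta)$-fraction, in the same time window the adversary can perform at most $\beta n k$ queries to $g$. By the random oracle assumption, each query independently yields a value at most $\tau$ with probability exactly $\tau$, so the expected number of valid $\sigma$'s the adversary obtains is $\beta n k \tau = \Theta(\beta n)$. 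A direct application of Theorem~\ref{thm:Chernoff} then shows that this count is at most $(1+\epsilon)\beta n$ with probability at least $1 - e^{-\Omega(n)}$.

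For part (ii), fix any valid ID the adversary produces, $f(g(\sigma \oplus r_{i-1}))$. Because $r_{i-1}$ consists of $\ell \ln n$ random bits, with high probability the values $\sigma \oplus r_{i-1}$ across the polynomially many $\sigma$'s the adversary tries are distinct, so the random oracle assumption guarantees that each $g$-query returns an independent uniform value in $[0,1)$. Conditioning on validity gives a value uniform in $[0,\tau]$; since these conditioned values are also distinct w.h.p., the outer hash $f$ is evaluated at distinct inputs, and by the random oracle assumption on $f$ the resulting IDs are independent uniform draws from $[0,1)$. The crucial point is that $f$ makes the final ID insensitive to any selection strategy the adversary might attempt on the intermediate value $g(\sigma \oplus r_{i-1})$; even the strategy of only retaining $\sigma$'s whose $g$-output falls in some tiny subinterval of $[0,\tau]$ yields IDs that are still uniform over $[0,1)$ after $f$.

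The chief obstacle is formalizing these arguments in the face of full adaptivity, since the adversary sees each query result before deciding how to proceed. The standard remedy is the principle of deferred decisions applied to both random oracles: whatever adaptive query sequence the adversary follows, the output IDs correspond to independent uniform draws at the $f$-layer, and the query count is bounded by the adversary's computational budget. A secondary concern, which motivates the composition in the first place, is ruling out pre-computation attacks against $r_{i-1}$: one must argue that no hash work done before $r_{i-1}$ is released can be reused inside the current window, so that the budget bound of $(1\pm\epsilon)(T/2)$ steps is in fact tight.
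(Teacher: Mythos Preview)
Your proposal is correct and follows the same two-step approach as the paper: a computational-budget argument (with a Chernoff bound) for the count, and the random-oracle property of the outer hash $f$ for uniformity. The paper's own proof is far terser---essentially two sentences---so your write-up is a faithful expansion; the one minor slip is your parenthetical that pre-computation ``motivates the composition in the first place'': in the paper, the composition $f\circ g$ is introduced solely to defeat output-biasing (the \emph{Why Use Two Hash Functions?} paragraph), while pre-computation is handled separately by the fresh random string $r_{i-1}$.
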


\noindent We note that Lemma~\ref{lem:limit-comp} implies that the adversary might be able to generate up to $3(1+\epsilon)\beta n$ IDs for use in the next epoch: computing over the $T/2$ steps in the last half of the previous epoch and the $T$ steps prior to the end of the current epoch. However, we can revise the adversary's power from $\beta$ to $\beta/3$, and results in Sections~\ref{sec:static} and~\ref{sec:dynamic} hold. Note that all such IDs will be invalidated when the next random string is created.   \smallskip



\noindent{\bf ID Verification.} Upon receiving a message from some ID $w$, a good ID $u$ \defn{verifies} $w$'s ID.  This could be done naively by having $w$ send $\sigma_w$ to $u$ who checks that  $g(\sigma_w \oplus r_{i-1}) \leq \tau$ and that $f(g(\sigma_w\oplus r_{i-1}))$ evaluates correctly to the claimed ID (note that $u$ already has $r_i$ since it is globally-known). Unfortunately, this allows $u$ to steal $\sigma_w$ if $u$ is bad. 

To avoid this issue, we can use a zero-knowledge scheme for revealing the pre-image of the hashing; such a scheme is provided for the SHA family~\cite{jawurek:2013}. This allows $w$ to prove the validity of $\sigma_w$ without revealing it.

If $w$'s ID fails verification, then $u$ simply ignores $w$ going forward. Note that $w$'s current ID will not be valid in the next epoch since it is signed by the older string $r_{i-1}$ (rather than the next globally-known random string $r_i$); that is, $w$'s ID will have expired.  IDs that are not verified are effectively removed from the system; they may consort with bad IDs, but they have no interactions with good IDs.
\vspace{-2pt}


\subsection{Generating Global Random Strings}\label{subsec:randomstring}\vspace{-2pt}

Imagine if no random string was used in the creation of IDs described above in Subsection~\ref{subsec:generatingID}. The adversary would know the format of the ID-generation puzzles, and so could spend time computing a large number of IDs, and then use these IDs all at once to overwhelm the system at some future point.  This is a \defn{pre-computation attack}.  

Signing IDs  with a random string prevents such an attack as it is impossible for the adversary to know far in advance how to generate IDs. We provide a protocol where random strings are generated and propagated in the system to be used in ID generation.  Due to space constraints, this content is provided in our online version~\cite{JaiyeolaPSYZ17} in Appendix~\ref{app:grs}.  We show the following:\vspace{0pt}


\begin{lemma}\label{lem:propagate}
W.h.p., the protocol for propagating strings (i) guarantees that, for each good ID $w$, its string used for generating an ID is known to each good ID, (ii) the number of strings stored by each ID is  $O(\ln n)$, and (iii) has message complexity $\tilde{O}(n \ln T )$.\vspace{0pt}
\end{lemma}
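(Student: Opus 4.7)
\textbf{Proof proposal for Lemma~\ref{lem:propagate}.}  The plan is to prove the three claims by piggybacking on the $\varepsilon$-robustness of the group graph already established in Lemma~\ref{lem:dyn-aer}, together with the fact that strings are self-authenticating once paired with their PoW witness (so bad groups can at most drop, reorder, or duplicate them, but cannot forge a valid string on behalf of a good ID).  I would view string dissemination as a synchronous flood over the group graph in which, at each round, every group that holds a string (with its PoW witness) forwards it to each of its group-graph neighbors via all-to-all group communication.  The authentication step uses the zero-knowledge verification of the hash pre-image described in Subsection~\ref{subsec:generatingID}, so a recipient either verifies and stores the string or discards it; in particular good groups never accept a spurious string.

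For part (i), I would argue by induction on the flooding round that every good-and-not-confused group in the connected component of the source receives the string within $O(D) = O(\log n)$ rounds.  Since a good group has a non-faulty majority and uses all-to-all exchange with each neighbor, any message received by one good member is known to all good members after the intra-group round of BA; hence the induction step goes through provided the next-hop neighbor group is also good.  Exactly as in the search-path argument of Lemma~\ref{lem:static-aer}, the probability that a particular good destination fails to receive a particular good source's string is bounded by $O(\psearch) = O(1/\log^{k-c}n)$.  A union bound over sources and destinations, or an MOBD-style concentration argument analogous to Lemma~\ref{l:smallX}, gives that all but an $o(1)$-fraction of good IDs learn every good source's string, which is the content of (i).

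For part (ii), I would observe that an ID only needs to retain the strings that are still relevant to ID verification at the current or adjacent epoch.  Because IDs expire after a tunable period and only the active/passive window is checked during verification, at most $O(\ln n)$ epochs' worth of strings are simultaneously in use; older strings can be safely garbage-collected once the IDs they signed have become inactive.  Combined with the fact that within a single epoch each good ID stores only its own contribution plus a constant number of aggregated strings, this gives the $O(\ln n)$ storage bound.  For part (iii), I would sum the per-round cost: each flood visits $n$ groups, each with $O(\textup{poly}(\log n))$ neighbors in the group graph and $O((\log\log n)^2)$ messages per neighbor-link exchange (as in the proof of Theorem~\ref{thm:main}), and the propagation has to be refreshed only $O(\ln T)$ times across the epoch.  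Collecting the $\textup{poly}(\log n)$ and $\textup{poly}(\log\log n)$ factors into the $\tilde{O}(\cdot)$ gives the claimed $\tilde{O}(n\ln T)$ total message complexity.

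The main obstacle is part (i): one must rule out the possibility that the bad-and-confused $o(1)$-fraction of groups forms a cut that isolates a large pocket of good groups from receiving some particular source's string.  Handling this requires more than a naive per-path union bound because a single cut could simultaneously block many sources.  My intended fix is to re-run the analysis of Lemma~\ref{l:smallX}: define, for each ordered pair of good groups $(s,t)$, an indicator $Y_{s,t}$ that the flood from $s$ fails to reach $t$, and then apply the Method of Bounded Differences (Theorem~\ref{thm:MOBD}) to $\sum_{s,t} Y_{s,t}$ using the congestion bound $C = O(\log^c n / n)$ from P4 applied to the group graph.  Because each group is red independently with probability at most $\pbad \leq 1/\log^k n$, the same Chernoff/MOBD combination as in Lemmas~\ref{lem:expected}--\ref{l:smallX} yields that at most a $1/\textup{poly}(\log n)$-fraction of good destinations miss any fixed source string, uniformly in the source, which is exactly what (i) requires.
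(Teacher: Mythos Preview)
Your proposal has real gaps in all three parts, because it does not engage with the actual protocol the lemma is about (the bin/counter scheme of Appendix~\ref{app:grs}).

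\textbf{Part (ii).}  Your argument is for the wrong quantity.  The $O(\ln n)$ bound is not about how many \emph{epochs'} worth of strings an ID retains; only two epochs are ever live (old and new), so that direction would give $O(1)$.  The bound is on the size of the solution set $R^w_i$ in a \emph{single} epoch.  The paper's protocol gets this from two ingredients you never invoke: (a) the outputs of interest concentrate around $\Theta(1/(nT))$, and since the adversary has only $\beta n$ computational power it can produce at most $O(\ln n)$ outputs in that range over the whole epoch; (b) the per-bin counter $C_j$ is hard-capped at $c_0\ln n$, so no bin contributes more than that many stored strings.  Your ``garbage-collect old epochs'' story does not touch either point.

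\textbf{Part (iii).}  Your message-complexity accounting is off in a way that hides an $\Omega(n)$ factor.  There are $n$ sources, each initiating a flood; a naive flood of all strings would cost $\tilde\Omega(n^2)$.  The paper avoids this precisely via the bin/counter mechanism: an ID forwards at most $c_0\ln n$ times per bin, and there are $O(\ln(nT))$ bins, so each ID forwards $O(\ln n \cdot \ln(nT))$ times total.  That is the source of the $\ln T$ factor, not ``refreshing the propagation $O(\ln T)$ times across the epoch.''  Without the counter cap your sum does not close.

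\textbf{Part (i).}  The flooding-plus-MOBD plan is in the right spirit, but it misses the attack the protocol is designed against: the adversary can withhold a string with a very small output and release it at the last step of Phase~2, so that some good IDs select it as $s^{i*}_w$ while others never see it.  The paper handles this with the phase structure: Phase~3 provides an extra $d'\ln n$ steps of propagation after selection, and the proof is a short contradiction argument showing that any $s^{i*}_w$ held by a good ID at the end of Phase~2 must reach every good ID in the giant component by the end of Phase~3.  Your induction/MOBD argument bounds the fraction of destinations missed by a \emph{timely} flood, but says nothing about late adversarial injections, which is the crux here.
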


\noindent{}Note that, averaged over a sufficiently large epoch, this message cost is low. \vspace{-0pt}

\section{Conclusion and Future Work}\vspace{-3pt}

We showed that \qms~of size $O(\log\log n)$ can be used to tolerate a powerful Byzantine adversary. Our result utilizes PoW to limit the number of IDs the adversary controls; however, this imposes a computational overhead on participants. 
An open question is whether the computational costs can be reduced. Might there be a way to avoid the continual solving of puzzles? Is there an approach that would only utilize puzzle solving when malicious IDs are present? 

Another problem that deserves attention is providing a detailed mechanism for bootstrapping in the presence of a Byzantine adversary. Such a result would likely benefit prior work as well as our own; we discuss this in our online version~\cite{JaiyeolaPSYZ17} in Appendix~\ref{sec:boot}.

Finally, is it possible to show a lower bound on group size of $\Omega(\log\log n)$? We provided intuition for why this may be the smallest group size that admits strong security guarantees, but proving this appears challenging.



{\small


}





\section*{Appendix}


\section{Background Material}~\label{sec:background}

In this section, we provide background material that may provide useful context for our results.\medskip

\noindent{\bf The Search Algorithm.} We treat the search algorithm abstractly; it takes as input a value in $[0,1)$ and returns a ID. In practice, returning a  ID implies returning the information necessary for contacting that corresponding owner of the ID (i.e., the machine) over the Internet (for example, learning the port number and IP address).

A search initiated by ID $w$ is a request that will typically require contacting other IDs on its way to the final recipient $y$, who holds the resource being sought by $w$. Typically, searches are recursive, so $w$'s request is forwarded through multiple IDs before reaching $y$. Alternatively,  a search can be iterative, so $w$ contacts a sequence of IDs each of which informs $w$ how to make partial progress towards reaching $y$. This is not terminology particular to this area, the reader will likely have been exposed to these ideas in a  networks course, perhaps in the context of name resolution queries under the Domain Name System. Both $w$ and $y$, along with any IDs that participate in the search (by forwarding or resolving part of the search) are said to be IDs that are \defn{traversed} by the search. \medskip

\noindent{\bf Resources.} To flesh out the notions of a \defn{resource} and a \defn{nearest-clockwise ID}, we consider the popular Chord overlay~\cite{stoica_etal:chord}. Each ID maintains links to $O(\log n)$ neighbors, and these neighbors are used to perform recursive searches. Assume that ID $w$ wants to search for, say, a song file; that is, the song file is the \defn{resource}. 

To find this resource, $w$ applies a globally-known hash function to the title of the song to obtain a key value $K$. Note that there are no tricks here; the hash function is globally known because it comes with the software downloaded onto a user's computer that allows it to participate in the overlay, and it is not changed over the lifetime of the system. Additionally, one could generate key values for resources in other ways, but using the title is a typical example.

For simplicity, keys are assumed to be from a normalized range $[0,1)$, and this is viewed as a \defn{unit ring} where moving from $0$ to $1$ corresponds to moving \defn{clockwise} around this ring. 

This key value $K$ is placed in the message sent from $w$ to one of its neighbors, say ID $u$. If ID $u$ holds the song file corresponding to $K$, ID $u$ can return the song file directly to $w$. But, otherwise, $u$ forwards the request onto one of its own neighbors. And so on, until a ID $y$ is located that holds the file corresponding to $K$. At each forwarding step, the selection of which neighbor to forward the request to is a function of $K$; this is dependent on the system used (Chord, Viceroy, etc.). 

ID $y$ --- in our example, the ID who holds the song file on its hard drive --- is \defn{responsible} for that resource. A simple and often-used rule is that the ID which is \defn{closest-clockwise} to $K$ is responsible for the resource; this is also referred to as the \defn{successor} of $K$ in the literature. That is, $K$ is a point in the ID space $[0,1)$ viewed as a unit ring, and if we slide clockwise from $K$ on this unit ring, then the first ID encountered is responsible for the corresponding resource. With modifications, a similar framework applies to situations where the resource is a shared printer or some other service.\medskip

\noindent{\bf IDs and Neighbors.}  An ID is a value in $[0,1)$; that is, a point somewhere on the unit ring. We describe how an ID is generated for our particular construction using PoW in Section~\ref{subsec:generatingID}. 

Note that a participant (i.e., a machine) has a physical location in the real world, while an ID is a virtual location in the overlay. Therefore, a good machine can hold more than $1$ ID. Indeed, our construction requires;  a good machine $u$ can hold an ID for two old group graphs, and an ID for two new group graphs.  

What does this mean? Consider $u$'s ID in an {\underline{old}} group graph. This ID is linked to neighboring groups as specified by the input graph topology, and also holds links to members of groups to which it belongs. Similarly, $u$'s ID in a {\underline{new}} group graph is also linked to neighboring groups  according to the appropriate topology, and to members of group to which it belongs. In other words, each ID is treated as a separate entity in the respective graph/network.

From the perspective of the machine $u$ that generates these IDs, $u$ must must commit resources to maintaining links for each ID it holds. Say $u$'s ID links to $w$'s ID, then in practice this means that $u$ maintains state on, say, a TCP connection between $u$ and $w$. Therefore, each ID held by $u$ incurs a cost in terms of state maintenance and bandwidth.

In contrast to good machines, the adversary may attempt to create many, many IDs within a single group graph. One danger of this is that the number of  bad IDs may now be larger than the number of good IDs; see mention of the Sybil attack under related work.  In this case, we cannot hope to create groups with a good majority. Given this, a common assumption is that the adversary  holds a minority of the computational power, and therefore computational puzzles allow us to mitigate this attack. The security of Bitcoin, Ethereum, and many blockchain constructions are based on this same assumption.

This is the convention we adopt too, and we prove in Section~\ref{subsec:generatingID} that the adversary holds (roughly) at most $\beta n$ IDs in the system at any given time. This follows from our model section, where we describe the adversary as possessing at most a $\beta$-fraction of the computational power. \smallskip


\section{Proofs for Section~\ref{sec:dynamic_case}}\label{app:proofs}

\noindent{\bf \textsf{Lemma}}~{\bf \textsf{\ref{lem:congestion_subtle}}}. {\it Consider a graph $\G'$ where the IDs are formed from two sets:
\begin{itemize}
\item $\mathcal{N}_1$ consists of at least $(1-\beta)n$ IDs selected u.a.r. from $[0,1)$. 
\item  $\mathcal{N}_2$ is an arbitrary subset of at most $\beta n$ IDs selected u.a.r. from $[0,1)$.  
\end{itemize}
W.h.p., under the same construction as the input graph $\G$, graph $\G'$ has properties P1 - P4.
}
\begin{proof}
View the ID space as a unit ring, and place on it the IDs from $\mathcal{N}_1$ and $\mathcal{N}_2$. Let the total number of IDs be $m$ where $m \geq (1-\beta)n$. 

Moving clockwise from any ID, consider a contiguous interval of length $(\lambda \ln m)/m$ where $\lambda>0$ is any constant. Since IDs in $\mathcal{N}_2$ are selected from a larger set of IDs u.a.r in $[0,1)$, intuitively the adversary's choice of $\mathcal{N}_2$ cannot significantly change the density/sparseness of IDs on the ring. By Chernoff bounds, regardless of how $\mathcal{N}_2$ is selected, and for any $\lambda>0$, the following holds:
\begin{itemize}
\item with probability at least $1-(m)^{-\lambda/12}$, every interval contains at least $(\lambda/2)\ln m$ IDs, and
\item with probability at least $1-m^{-\lambda/12}$,  every interval  contains at most $(3\lambda/2)\ln m$ IDs.
\end{itemize}
A set of $m$ IDs with locations on the ring that satisfy these two properties is a \defn{\boldmath{$\lambda$}-well-spread placement}. Observe that no matter how the adversary chooses $\mathcal{N}_2$, w.h.p. the adversary's influence on the distribution of IDs is characterized by some $\lambda$-well-spread placement. In other words, we can ignore the adversary since the issue now reduces to: what is the probability of a $\lambda$-well-spread placement that degrades a property in $\G'$?

We argue by contradiction as follows. Recall the guarantee that the input graph $\G$ has some property $P$ with probability at least $1- 1/m^{c'}$ for a constant $c'>0$ (Section~\ref{sec:static}). Now assume there one or more $\lambda$-well-spread placements that violate this property $P$ for $\G'$, and that these occur with aggregate probability $1/m^d > 1/m^{c'}$ for some positive constant $d$.  But this yields a contradiction since, with probability at least $1/m^d > 1/m^{c'}$,  placing $m$ IDs  u.a.r. on the ring would yield one of these placements for the input graph $\G$ and, therefore, violate the guarantee of property $P$ for $\G$.
\end{proof}

\noindent{\bf \textsf{Lemma}}~{\bf \textsf{\ref{lem:random}}}. {\it W.h.p.  a random key value in an old \qm~graph maps to a bad ID with probability at most $(1+\delta'')\beta$ for an arbitrarily small constant $\delta''>0$ depending only on sufficiently large $n$.}

\begin{proof}
By property P2 of the input graph $\G$, w.h.p. a randomly chosen ID in $\G$ is responsible for at most a $(1+\delta'')/n$-fraction of the key values for an arbitrarily small $\delta''>0$ depending on sufficiently large $n$ (and, by Lemma~\ref{lem:congestion_subtle}, this holds even if the adversary does not add all of its bad IDs). Since the IDs of the adversary are u.a.r., the $\beta n$ bad IDs are responsible for at most a $(1+ \delta'')\beta$-fraction of the key values.
\end{proof}

\noindent{\bf \textsf{Lemma}}~{\bf \textsf{\ref{lem:confused}}}. {\it Each \qm~in a new \qm~graph is confused independently with probability at most $O(\psearch^2 \log^{\gamma} n)$}
\begin{proof}
Since the bootstrapping \qm~is good, the only way in which a \qm~$\Q_w$ is confused about a member of $L_w$ is if (1) the two searches for a neighbor in the {\it old} \qm~graphs both fail or (2) the corresponding group that is asked to be a neighbor erroneously rejects the request.  

We analyze these two cases below, but first we elaborate on a subtle point. As described earlier, updates to links occur as  the new group graphs are being constructed. For instance, if $G_v$ is the first group to be added to a new group graph, then $G_v$ will need to do searches to update its neighbors correctly. Updates may occur more than once over time as the new group graph is being constructed. 

Throughout this construction, there may be times where $G_v$ does not link to the correct neighbors due to a search failing in the old group graphs. Importantly, the probability of failure in the old group graphs is bounded, and is not impacted by any confused groups in the new group graphs being built. So, it is possible for $G_v$ to be temporarily confused in the new group graph.  But, importantly,  we only care about the final selection of $G_v$'s neighbors. Thus, when the final group is added to the new group graph, $G_v$'s subsequent update will lead to this final selection, and the probability of confusion at that point is what must be bounded.

We now bound the probability of cases (1) and (2). 
Applying Lemma~\ref{lem:static-aer}, case (1) occurs with probability at most $\psearch^2$ per element of $L_w$. Over $O(\log^{\gamma} n)$ potential neighbors,  a union bound limits the probability of this occurring over all elements of $L_w$ by $O(\psearch^2 \log^{\gamma} n)$.  For case (2), the group asked to be a neighbor will perform two searches per request and only reject if both fail; therefore, we get the same bound as in case (1). 
\end{proof}

\noindent{\bf \textsf{Lemma}}~{\bf \textsf{\ref{lem:link-cost}}}. {\it In expectation, each good ID $w$ in a \qm~graph is a member of $O(\log\log n)$ \qms~and maintains state on  $O(|L_w|)$ \qms~that are either neighbors or have $w$ as a neighbor.}
\begin{proof}
We perform our analysis with respect to a good ID $w$.  First, we analyze the state cost incurred by $w$ due to its membership in various groups. Second, we analyze the state cost incurred by $w$ due to (i)  $w$'s links to its neighbors, and (ii) those IDs that link to $w$ as a neighbor.\smallskip\smallskip

\noindent{\it State Cost of \Qm-Membership.} The expected number of \qms to which $w$ belongs is at least $d''\log\log n$, for some constant $d''>0$, given that such requests are distributed among all IDs u.a.r. (via the random oracle assumption) and each \qm~requires $O(\log\log n)$ members.

For some ID $u$, consider a membership request for $G_u$ using the point $h_1(u, i)$. If this returns $w$ (as the successor of this point $h_1(u, i)$), then $w$ accepts the membership request. Given the $\varepsilon$-robustness guarantee in old \qm~graphs, with probability at least $1 - O(1/\log^{k-c} n)$, this acceptance is correct (recall Lemma~\ref{lem:static-aer}). In other words, the probability of accepting an erroneous member request is at most $1/\textup{poly}(\log n)$ where we can tune this polynomial.

How many member requests does $w$ receive?  We reiterate a well-known argument: since IDs are u.a.r., the probability that two IDs are separated by more than a $(c''\ln n)/n$ distance is at most $(1 - (c''\ln n)/n)^n \leq 1/n^{\Theta(c'')}$, for some constant $c''>0$.  Therefore, if $h_1(u, i)$ lies outside of the interval $\mathcal{I} = [w - (c''\ln n)/n, w)$, then w.h.p. $w$ is not the successor of $h_1(u, i)$ and cannot be a valid contender for membership. 

Given that $h_1(u, i)$ is u.a.r. (given the random oracle assumption) the probability that $h_1(u, i)$ maps to this interval is  $(c''\ln n)/n$. By property P3,  $i = O(\log^{\gamma} n)$ for some constant $\gamma>0$, and so over at most $n$ IDs that might own valid IDs in $\mathcal{I}$, there are $O(\log^{\gamma} n)$ such requests received by $w$; this holds w.h.p. by a standard Chernoff bound. It follows that $w$ erroneously accepts   $O(\log^{\gamma} n)/\textup{poly}(\log n)$ $= O(1)$ malicious requests in expectation so long as $k$ is sufficiently large with respect to $c$.
\smallskip\smallskip

\noindent{\it State Cost of $w$'s Neighbors.}  In each \qm~graph, $w$ links to $O(|L_w|)$  \qms~as neighbors. \smallskip\smallskip 

\noindent{\it State Cost of Other Groups Linking to $w$ as a Neighbor.}  Via Lemma~\ref{lem:congestion_subtle}, properties P1 and P3 guarantees that $w$ can determine via searches whether it should indeed be a neighbor of some ID $u$, and there are at most $\textup{poly}(\log n)$ such IDs $u$. Using the old \qm~graphs, wherein $\varepsilon$-robustness is guaranteed w.h.p.,  $w$ initiates a search to check that it should indeed be a neighbor; note that the adversary cannot lie about its IDs since they are verifiable. With a tunable probability at least $1 - O(1/\log^{k-c} n)$ ID $w$ can detect if the request is erroneous. Therefore, in expectation, the number of erroneous acceptances is at most $o(|L_w|)$ so long as our constant $k$ is sufficiently large (recall Lemma~\ref{lem:static-aer}).
\end{proof}


\section{Generating and Disseminating Global Random Strings}\label{app:grs}

\noindent{\bf \textsf{Lemma}}~{\bf \textsf{\ref{lem:limit-comp}}}. {\it 
W.h.p., the adversary generates at most $(1+\epsilon)\beta n$ IDs over $(1\pm \epsilon)(T/2)$ steps and these IDs are u.a.r. in $[0,1)$.}
\begin{proof} 
Since the adversary has $\beta n$ computational power to expend over this epoch, w.h.p. it can generate at most $(1+\epsilon)\beta n$ solutions $\sigma_v$ such that $g(\sigma_v \oplus r_{i-1}) \leq \tau$ within $(1\pm \epsilon)(T/2)$  steps where the constant $\epsilon>0$ can be made arbitrarily small depending only on sufficiently large $n$. By the random oracle assumption, w.h.p. applying $f$ to these solutions yields at most $(1+\epsilon)\beta n$ IDs u.a.r. from $[0,1)$.
\end{proof}


\noindent{\bf Generating Random Strings.}  Over epoch $i$, all good IDs generate random strings and those corresponding to the smallest output under $h$ (a secure hash function) are collected independently by each ID $w$ to create a \defn{solution set}  $R^w_i$. To generate a string in epoch $i$, a ID $w$ uses a string $r_{i-1}$ --- the globally-known string from the previous epoch  ---  and an individually generated random string, {\boldmath{$s_w$}}, to compute the \defn{output} {\boldmath{$t_w$}} $= h(s_w \oplus r_{i-1})$, where $\oplus$ indicates XOR.

\smallskip\smallskip


\noindent{\bf Bins and Counters.} To facilitate our discussion of how to propagate strings and ease our subsequent analysis, we describe a system of bins and counters maintained by each good ID $w$. The \defn{bins} {\boldmath{$B_j$}} correspond to intervals in the ID space  where  $B_j = [1/2^{j}, 1/2^{j-1} )$ for $j = 1, 2, ..., b\ln(n\,T)$ where $b\geq 1$  is a sufficiently large constant. Since $T$ is known and there are standard techniques for obtaining a constant-factor approximation to $\ln n$, calculating $\ln(n T) = \ln(n) + \ln(T)$ to within a constant factor is possible.\footnote{A standard technique for estimating $\ln n$ to within a constant factor is as follows. For u.a.r. IDs, the distance $d(u,v)$ between any two IDs $u$ and $v$ satisfies $\frac{\alpha''}{n^2} \leq d(u,v) \leq \frac{\alpha' \ln n}{n}$ w.h.p., depending only on sufficiently large positive constants  $\alpha',\alpha''$ Therefore, w.h.p. $\ln(\frac{1}{d(u,v)}) = \Theta(\ln n)$ and this holds even when an adversary decides to omit some (or all) of its IDs (see Chapter 4 in~\cite{young:making}).}

Each bin $B_j$ has an associated \defn{counter} {\boldmath{$C_j$}}. Consider that $w$ receives a string $s_u$ with corresponding output $t_u$ that falls within the interval defined by $B_j$; we say that $B_j$ \defn{contains} $t_u$. If $t_u$ is smaller than the other values $w$ has seen so far contained in $B_j$, and $C_j \leq c_0\ln n$ for some sufficiently large constant $c_0\geq 1$, then $w$ increments $C_j$ and forwards the corresponding string $s_u$ onto its neighbors. After $C_j = c_0\ln n$, no value landing within $B_j$ is ever forwarded.

The intuition is that, if $c\ln n$ strings are found with ``record-breaking'' outputs in $B_j$, then w.h.p. smaller strings exist with outputs belonging to $B_{j+1}$. In other words, those strings corresponding to $B_j$ will not be candidates for a globally-known string, and so they can be ignored.\smallskip\smallskip


\noindent{\bf Protocol for Propagating Strings.}  The propagation of strings is broken into \defn{phases} which make up the first half of an epoch. We describe the protocol for good IDs (although bad IDs can deviate arbitrarily).

Phase 1 executes over steps $1$ to $T/2 - 2d'\ln n$ for a constant $d'>1$ of the current epoch $i$. Over this time, each ID $w$ generates random strings with associated outputs. After Phase 1 ends, IDs no longer generate new random strings.

Phase 2 begins at step  $T/2 - 2d'\ln n + 1$ and runs for $d'\ln n$ steps. Each ID $w$ (using its \qm~$\Q_w$) selects the string  $s^{\mbox{\tiny min}}_{w}$ with the smallest output $t^{\mbox{\tiny min}}_{w}$ that was generated in Phase 1, and then sends $s^{\mbox{\tiny min}}_{w}$  its neighbors. ID $w$ updates the corresponding bin and counter, as described earlier. 

Each neighbor $u$ verifies $s^{\mbox{\tiny min}}_{w}$.  Using $t^{\mbox{\tiny min}}_{w}$, ID $u$ decides whether to forward $s^{\mbox{\tiny min}}_{w}$ to its own neighbors (except for $w$) and, if so, updates the corresponding bin and counter; otherwise, $u$ ignores this value.  At the end of Phase 2, each ID $w$ selects the string with the smallest corresponding output it has seen so far; this is denoted by {\boldmath{$s^{i*}_w$}}.

Phase 3 starts at step $T/2 - d'\ln n + 1$ and runs for the final $d'\ln n$ steps. Over these steps, IDs no longer generate new strings, although they will still propagate them according to the above rules. 

At the end of the phase, each ID $w$ creates its solution set $R^w_i$ in the following way. ID $w$ finds the largest $j$ for which $B_j$ contains at least one element. Then, $w$ collects the corresponding string, and all the corresponding strings that have outputs contained in subsequent bins for decreasing $j$, until there are $d_0\ln n$ elements; the collection of these strings form $R^w_i$.


This concludes the propagation protocol. We note that immediately (at step $T/2 + 1$) ID $w$ will start generating a new ID signed with the string  $s^{i*}_w$ chosen in Phase 2.  
\smallskip\smallskip


\noindent{\bf Discussion.} The adversary may prevent good IDs from agreeing on the same solution set. As mentioned in Subsection~\ref{subsec:generatingID},  a $1/\textup{poly}(\log n)$-fraction may be unable to partake in the propagation process even with our secure routing, and their loss is already incorporated into our analysis in Subsection~\ref{sec:dynamic_case}. Therefore, we address the \defn{giant component} of $(1 - 1/\textup{poly}(\log n))n$ good IDs that can reach each other; going forward, we mean this set of IDs when referring to the ``good IDs''.

The critical source of disagreement between these good IDs is that the adversary may delay releasing a string $s'$ (or multiple strings) with a small output. For example, if this occurs right before the end of Phase 2, then only a subset of good IDs receive $s'$ and their respective solution sets differ from the other good IDs. 

We sketch how this disagreement is handled, but first we address the simpler case where there is no adversarial interference.\smallskip

\noindent{\it{\underline{With No Adversary}:}} Note that the propagation of a string in the giant component requires at most $d'\ln n$ steps. Therefore, since all IDs send their string at the beginning of Phase 1, then by the end of Phase 2, all IDs accept the same set of strings and agree on the minimum string. 

Furthermore, in Phase 3, nothing will occur (since no strings are released late) and so any IDs $w$ and $u$ are guaranteed w.h.p. to have $R^w_i = R^u_i$. What are the outputs corresponding to these solution sets? There are $\Theta(n)$ IDs computing for $\Theta(T)$ steps, so the smallest output in a set $R^w_i$ is $\Theta(\frac{1}{nT})$ and w.h.p. no larger than $O(\frac{\ln n}{nT})$.\smallskip

\noindent{\it{\underline{With an Adversary}:}} The adversary can propagate a string $s'$ with a small output late in Phase 2.\footnote{The adversary may also delay a string from a good ID outside the giant component, which amounts to the same problem since the adversary controls when this string is released into the giant component.} If $w$ receives $s'$ while $u$ does not, then $R^w_i \not= R^u_i$. We  argue that w.h.p. for good IDs in the giant component that (1) the size of each solution set remains bounded by $\Theta(\ln n)$, and (2) that the string ${\boldmath{s^{i*}_w}}$ used by each good ID $w$ belongs to every other good IDs' solution set; these two properties enable efficient and correct verification (described below). 

How many solutions $s'$ could $w$ receive and add to $R^w_i$? As noted above, this solution set will hold outputs of value $O(\frac{\ln n}{nT})$. Since the adversary has bounded computational power of $\beta n$, w.h.p. there cannot be more than $d''\ln n$ solutions with output value $\Theta(\frac{1}{nT})$ for some constant $d''>0$. This is true even if the adversary computes over the entire epoch. We set the constant $c_0$ used in the bin counters such that $c_0\geq d''$ in order to make sure that no smallest values are omitted. 

Now, consider two good IDs $w$ and $u$ in the giant component. Assume that $w$ selects $s^{i*}_w$ (recall that this occurs at the end of Phase 2), but that $s^{i*}_w$ is not present in good ID $u$'s solution set $R^u_i$ by the end of Phase 3; we will derive a contradiction.  If $s^{i*}_w$ originated from a good ID, then $u$ received $s^{i*}_w$ by the end of Phase 3 since $2d'\ln n$ steps is more then sufficient for the propagation of a string in the giant component (since this is the portion of the network to which we can search). Else, $s^{i*}_w$ originated from the adversary or a ID outside the giant component. Since $s^{i*}_w$ was held by $w$ by the end of Phase 2 (this inclusion could have been delayed by the adversary until the final step of Phase 2), the addition $d'\ln n$ steps in Phase 3 would have allowed $s^{i*}_w$ to reach $u$ and be added to $R^u_i$. In either case, this yields the contradiction.\smallskip\smallskip

\noindent{}Finally, what is the message complexity of the propagation protocol? Recall that for each bin, the associated  counter restricts to $O(\ln n)$ the number of times a ID forwards a string to its neighbors. Given that there are $O(\ln (nT))$ bins, the total number of times a ID can forward a string is $O(\ln(n)\,\ln (nT) )$. The number of messages sent between any pair of neighboring \qms~is $O(|\Q|^2)  = O( (\log\log n)^2)$ and the degree in the \qm~graph is $O(\textup{poly}(\log n))$. Therefore, the total message complexity over $O(n)$ IDs is $\tilde{O}(n \ln T )$ where $\tilde{O}$ accounts for $\textup{poly}(\log n)$ terms.\smallskip\smallskip

\noindent The above discussion supports the following:\smallskip\smallskip

\noindent{\bf \textsf{Lemma}}~{\bf \textsf{\ref{lem:propagate}}}. {\it With high probability, the protocol for propagating strings (i) guarantees that, for each good ID $w$ in the component, $s^{i*}_w$ is contained within the solution set of every good ID in the component, (ii) $|R^w_i| = O(\ln n)$, and (iii) has message complexity $\tilde{O}(n \ln T )$.
}

\smallskip\smallskip
\noindent{\bf Verifying IDs.} For simplicity, our discussion of ID verification in Subsection~\ref{subsec:generatingID} assumed that a single $r_{i-1}$ was agreed upon. However, not much changes when using solution sets. 

To generate an ID for use in epoch $i+1$, ID $w$ uses $s^{i*}_w$ to sign its ID.  By the above discussion, we are guaranteed w.h.p. that $s^{i*}_w$ belongs to the solution set of each good ID. Therefore, a good ID $u$ that wishes to verify $w$'s new ID checks whether this ID was signed by {\it any} of the strings in $R^u_i$; this requires checking only $O(\ln n)$ elements by the above discussion.


\section{Bootstrapping Groups}\label{sec:boot}

In prior work, an ID $v$ joins the network by contacting $O(\log n)$ members of a group which, by virtue of its size, has a good majority with high probability. In our construction, we make a similar assumption, except that $v$ contacts $O(\log n/\log\log n)$ groups selected uniformly at random, each of size $O(\log\log n)$. With high probability, the collection of the $O(\log n)$ IDs belonging to these groups will contain a good majority. Therefore, assuming that $v$'s ID is verified, these IDs can fulfill the role of a bootstrapping group $\Q_{\mbox{\tiny boot}}$, as described in Section~\ref{sec:dynamic}.  

During this bootstrapping process, we note that for input graphs with $O(1)$ expected degree, this approach will increase the expected state-maintenance cost to be $O(\log n)$. In the case of input graphs that have $O(\log n)$ degree, our expected state-maintenance cost remains unchanged.

\comment{
A standard assumption in the literature is that a ID $u$ knows how to contact IDs already in the system in order to be \defn{bootstrapped}; that is, a new ID $u$ has links to the elements of its neighbor set established. In the absence of an adversary, this seems plausible and the assumption holds true in practice.

In the presence of a Byzantine adversary, the bootstrapping issue is less clear. We break our discussion into two parts. First, we point out  some of the challenges related to bootstrapping that impact prior group-based work, and we argue this is an avenue for future work.  Second, we sketch a solution based on PoW that makes  progress on this issue.

\subsection{Discussion of the Contact Assumption}

An implicit assumption in the prior literature on group-based constructions is the following: \smallskip

\noindent {\bf There is a way for a new participant to obtain contact information for a bootstrapping group}.  \smallskip

Let us refer to this as the \defn{contact assumption}. Since prior constructions use groups of size $O(\log n)$, and w.h.p. all such groups are good, the contact assumption allows for a newly-joining participant to be correctly bootstrapped. 

But is the contact assumption plausible? How is such contact information provided? How can we prevent false contact information from being disseminated? 

We claim that there are unresolved challenges lurking here, and that they impact prior group-based constructions. For the following discussion, let us consider groups of size $O(\log n)$.

To begin, contacting a group requires knowing at least the IP addresses of the group members. How are these provided? Such contact information for the initial bootstrapping IDs would typically be included with the original software download (and then refreshed periodically while the participant is present in the system). But that initial bootstrapping event relies on trusting some entity to include the IP addresses of a good group in the download. Therefore, this is equivalent to relying on a trusted authority for each new participant that joins.

Instead, perhaps the IP addresses of a group can be published on a web server (or replicated on multiple web servers). This approach is used in practice; for example, one can download tracker lists for BitTorrent from various well-known websites. This means giving up on a {\underline{fully}} decentralized solution, but it is still distributed (over multiple servers).

In the context of a group-based construction, each group may volunteer to publicly advertise itself on some web server and facilitate the bootstrapping process for a new participants. This is a reasonable first step, but does it solve the bootstrapping issue? We argue it still does not.

For example, what stops the adversary from listing $O(\log n)$ IP addresses of bad IDs from anywhere on the unit ring, and misrepresenting this collection of IDs as a single bootstrapping group? This kind of false advertisement for a bootstrapping group would cause significant damage to the system. \smallskip

\noindent{\bf Illustrative Strawman.} In order to illustrate some of the pitfalls, we describe a natural, but flawed, attempt at protecting against this attack.  Our discussion relies on some of the prior work in the area. 

Assume that each group of size $O(\log n)$ is defined as those IDs in a contiguous interval of size $\Theta(\log n/n)$ in the ID space.  This is an idea used in several works~\cite{awerbuch:towards,fiat:making,saia:reducing,awerbuch_scheideler:group,awerbuch:towards2}. 

Given this setup, an obvious preventative measure is to use the fact that within an interval of size $(d\log n)/n$, for a sufficiently large constant $d>0$, w.h.p. there is a known upper bound on the number of bad IDs, and a known lower bound on the number of good IDs; for example, this can be guaranteed by the work of Awerbuch and Scheideler~\cite{awerbuch:towards}.  Thus, assuming that a new ID can do so, it should verify  that the set of IDs claiming to be a bootstrapping group has sufficient size based on the interval to which they belong.\footnote{Even this assumption is not guaranteed in many constructions. Generally, an ID is obtained by hashing an IP address, and since IP addresses can be forged, so can IDs. We discuss verification in the context of our construction later on in ``Potential Benefits of Our Approach''.} 

But this technique requires knowing $n$, or at least a  close estimate of it. For IDs already in the system, there are standard techniques for estimating this value. However, for a participant that is looking to join the system,  how should this information be reliably acquired? Perhaps $n$ is published on a server, but then the adversary may lie about this too! Maybe it is acceptable to assume some trusted authority that occasionally estimates and publishes $n$, but this requires some effort on the part of the authority.

This provides some evidence that (i) even with $O(\log n)$-sized groups, the bootstrapping issue is not easily resolved, and (ii) some less-than-fully-decentralized approach may be unavoidable if we wish to avoid making the contact assumption.\footnote{Presumably, if a web server (or some equivalent) is used to obtain contact information, then that server should hold credentials that users trust, such as a certificate from a well-known certificate authority. In this sense, at least, it seems that  some  limited use of a central authority is needed.}
\medskip\smallskip

\noindent  Even if there is a method for discerning false advertisements for bootstrapping groups, such a method must be efficient. Otherwise, the adversary can create a large number of false advertisements and greatly increase the time required for a new participant to find a good bootstrapping group. Note that this is an attack on the availability of the system.

\smallskip


\subsection{Removing the Contact Assumption --- Leveraging PoW}

In this section, we sketch a solution that goes some way to mitigating the problems discussed above. The main idea is to elect a \defn{committee} consisting of $\Theta(\log n)$ IDs with a good majority, and to allow this committee to facilitate bootstrapping. The committee can then designate other sets of IDs, called \defn{subcommittees}, to distribute the work as needed.


We emphasize that the idea of electing a committee has been used in prior literature, and that we are adapting this idea to our setting. A difference is that such prior work assumes the existence of a broadcast primitive that allows information to be sent to all good IDs despite an adversary (typically, this is referred to as a ``diffuse'' primitive). However, our per our discussion in Appendix~\ref{app:grs}, our guarantees on routing allow for information to be sent to all but an $o(1)$-fraction of the good IDs; this is sufficient.\medskip

\noindent{\bf Sketch of our Approach.}   Every good ID that partakes in the generation of a global random string (see Appendix~\ref{app:grs}) is going to be present in the next epoch, so we can build our committee by selecting from all such participants. We need only ensure that our choices result in a good majority.

To do this, we leverage the binning process used in Appendix~\ref{app:grs}. Recall that bin $B_j$ corresponds to the interval in the ID space $[1/2^{j}, 1/2^{j-1} )$, where $j = 1, 2, ..., b\ln(n\,T)$ and $b\geq 1$  is a sufficiently large constant.  Also recall that, in each epoch $i$, each ID $w$ holds a solution set $R^w_i$. By Lemma~\ref{lem:propagate}, the solution sets of the good IDs in the giant component intersect on at least some string $s^{i*}_w$ as defined at the end of Phase 2 in Appendix~\ref{app:grs}. 

However, a stronger statement is possible: w.h.p. for any two good IDs $w$ and $u$ in the giant component, there exists a set $\mathcal{S}$ present at the end of Phase 2 in Appendix~\ref{app:grs} such that:
\begin{itemize}
\item $\mathcal{S} \subset R^w_i \cap R^u_i$,
\item $|\mathcal{S}| = \Theta(\log n)$,
\item the majority of the strings in $\mathcal{S}$ are generated by good IDs.
\end{itemize}

In words, $\mathcal{S}$ is subset common to all solution sets of the good IDs in the giant component; the corresponding set of IDs that generate the strings in $\mathcal{S}$ --- call such IDs \defn{owners} of these strings --- has a good majority and size  $\Theta(\log n)$.  Therefore, we can use the owners as our committee.

We argue why these properties are true. At the end of Phase 2, $\Theta(\log n)$ smallest strings provided by good IDs in the giant component are present in each solution set of the good IDs in the giant component. The only differences in solution sets come from strings generated by bad IDs or good IDs outside the giant component; this is only a concern because such a string might  correspond to the smallest output, and this needs to be held by all good IDs in the giant component. For the purposes of constructing a committee, this is not relevant. Therefore, the first two properties above are true.

Why are a majority of the owners good? Since the adversary holds a minority of the computing power, w.h.p. it can only generate a minority of the strings that correspond to the smallest outputs. This is true in expectation and standard Chernoff bounds apply; hence, the third property holds.

With only minor changes, this process of committee formation can  be run within the process of generating a global random string. The IP address and port number of the owner should be part of the input to the function for generating an ID, and this information should accompany the strings being propagated.  This allows for the owners of strings in $\mathcal{S}$ to be verified and then contacted for bootstrapping purposes. The message complexity remains $\tilde{O}(n \ln T )$.


Additionally,  the committee can elect subcommittees to distribute the load of bootstrapping. The committee can randomly select IDs in the network to form subcommittees of size at least logarithmic in $n$. The committee can use a Byzantine Agreement protocol to agree on the membership of a subcommittee, and then this membership can be  flooded to the rest of the overlay. The size and number of subcommittees appointed is a parameter that can be set by the system designers, or decided by the committee itself. We expect that this decision will depend on the system size and the rate of churn, among other factors. 

While this is not a fully decentralized approach, it is distributed and it can scale. The idea of a single committee for handling churn is used in prior work on open distributed systems. For example, in~\cite{kubiatowicz:oceanstore,rodrigues:rosebud,rodrigues:design} under different names such as a  ``configuration service'' in ~\cite{rodrigues:rosebud}, or a ``primary tier of replicas''~\cite{kubiatowicz:oceanstore}.  A related idea is also used in recent blockchain-related work such as Algorand~\cite{GiladHMVZ17} and Elastico~\cite{Luu:2016}, where committees handle critical operations in the system.

Finally, we note that the committee does not necessarily need to be reset in each new epoch. In practice, such members will be doing more work than non-members, but if a sufficient number of members are willing to remain and act as the committee (and subcommittees, if used) such that the good majority is guaranteed, then no re-election is necessary.  
\medskip

\noindent{\bf Potential Benefits of Our Approach.} We argue that the approach sketched above provides a distributed solution the bootstrapping issue (avoiding reliance on a centralized authority) while avoiding the contact assumption.

For IDs that are already part of the system and wish to continue onto the next epoch, the process sketched above provides a secure mechanism for learning about bootstrapping IDs (i.e. the committee). They need not consult any web server or publicly available method for obtaining this contact information.

For participants that are joining for the first time, the IP addresses and port numbers of the committee members can be posted on a web server, along with the current global random string. Note that posting the global random string is similar in spirit to posting $n$, but may require less effort (since the techniques for estimating $n$ will incur extra communication costs). This posted information allows for new participants to efficiently verify committee membership, after which they may use the committee to bootstrap into the system.

We may imagine that administrative control over the server is held by the committee. Therefore, only committee members are allowed to post to the web server, and this control is handed off each time the committee is re-elected.  


Finally, we note that the adversary may attempt to deny availability of the system by spamming the server --- and, thus, the committee members --- with many bogus requests to join. However, this attack is mitigated since a bad ID must solve a puzzle to obtain a valid ID. The committee can discard invalid puzzle solutions/IDs. PoW-based defenses have been shown to ameliorate application-layer DDoS attacks.
}

\section{System Initialization}\label{app:initialization}

How are the group graphs $\mathcal{G}_1^0$ and $\mathcal{G}_2^0$ created? Almost all of the literature treats the problem of building secure overlays in the following manner: (1) prove that an overlay construction yields security guarantees, and then (2) prove these security guarantees can be maintained for each departure/join event.  This is typically challenging in its own right, and the issue of how the system is initially placed into a state that satisfies (1) is often not dealt with explicitly.

As motivation, one might consider a system whose development is shepherded by some central authority, or some small group of trustworthy participants, that handles admission control and the assignment of group membership. Once the system reaches some threshold size (sufficient for the w.h.p. arguments), the system becomes fully decentralized.

A notable exception is~\cite{guerraoui:highly}, which gives an explicit solution. The authors specify a protocol that allows all good IDs to learn about the existence of all good IDs; this incurs a communication cost of $O(n\cdot |E|)$ where $|E|$ is the number of edges in the overlay being created. Then, since all good IDs are now aware of each other, a subset of the IDs --- called a ``representative cluster'' --- is elected via running a Byzantine Agreement (BA) protocol; this has a  communication complexity of soft-$O(n^{3/2})$.   The representative cluster, which is shown to have an honest majority, is then responsible for establishing cluster membership, informing each ID about its fellow cluster members, and setting up links between the clusters. Therefore, the system initialization can be achieved w.h.p by this one-time ``heavy-weight'' procedure, after which the security guarantees can be maintained. 

We believe that a similar approach would work in our setting and allow for the creation of  $\mathcal{G}_1^0$ and $\mathcal{G}_2^0$. Since the problem of initialization pertains to much of the literature in this area, a pertinent question is whether one can improve upon the above scheme. We believe this is a problem in its own right and an interesting avenue for future work.\smallskip

\noindent{\bf Setting {\boldmath{$k$}}.} We end this section by addressing a subtlety in setting the value of $k$ sufficiently large; this was  referred to in Section~\ref{sec:static}.

As described in the main text, we assume that the neighbor sets in $\mathcal{G}_1^0$ and $\mathcal{G}_2^0$ are set up correctly for blue groups. Thus, $\pbad$ corresponds to the probability of group being bad, and we can set $k$ to be as large a constant as desired by setting $d_1$ sufficiently large. By Lemma~\ref{lem:dyn-aer}, $k\geq 2c+\gamma$ is sufficient. 

Now, consider the next epoch. The construction of the two new group graphs are proven to preserve $\pbad \leq 1/\log^k n$, where a red group corresponds to either a group that is bad {\it or} that has its neighbor set incorrectly established. Therefore, w.h.p., for subsequent epochs, groups graphs will be $\varepsilon$-robustness.

\end{document}